\newtheorem{proposition}{Proposition}%
\newtheorem{assumption}{Assumption}%
\newtheorem{example}{Example}%
\newtheorem{definition}{Definition}%
\newcommand{\diff}[1]{\mathrm{d}#1}
\newcommand{\diag}[1]{\mathrm{diag}\!\left(#1\right)}
\newcommand{\Hell}[2]{H\!\left( {#1}, {#2} \right) }
\newcommand*{\argmin}{\mathrm{arg\,min}}
\newcommand{\norm}[1]{\left\Vert#1\right\Vert}
\newcommand{\sigobs}{\Sigma_{\mathrm{obs}}} 
\newcommand{\sigpr}{\Sigma_0} 
\newcommand{\mupr}{\mu_0} 
\newcommand{\sigpo}{\Sigma} 
\newcommand{\mupo}{\mu} 
\newcommand{\CCSmix}{\tilde\pi_r^{\mathrm{CCS}}(w|y)}
\newcommand{\CCSmixmarg}{\tilde\pi_r^{\mathrm{CCS}}(w_\mathcal{I}|y)}
\newcommand{\MAPmix}{\tilde\pi_r^{\mathrm{MAP}}(w|y)}
\newcommand{\MAPmixmarg}{\tilde\pi_r^{\mathrm{MAP}}(w_\mathcal{I}|y)}
\newcommand{\CCSposX}{\tilde\pi^{\mathrm{CCS(X)}}_r(x|y)}
\newcommand{\CCSmixmargX}{\tilde\pi_r^{\mathrm{CCS}}(x_\mathcal{I}|y)}
\newcommand{\CCSposW}{\tilde\pi^{\mathrm{CCS(W)}}_r(x|y)}
\newcommand{\MAPposX}{\tilde\pi^{\mathrm{MAP(X)}}(x|y)}
\newcommand{\MAPposW}{\tilde\pi^{\mathrm{MAP(W)}}_r(x|y)}
\definecolor{DTUred}{RGB}{153, 0, 0}
\definecolor{green}{RGB}{32, 110, 22}
\begin{document}
	\title[
	Continuous Gaussian mixture solution for linear Bayesian inversion
	]{
	Continuous Gaussian mixture solution
	for linear Bayesian inversion with application to Laplace priors
}


\author{R Flock$^1$, Y Dong$^1$, F Uribe$^2$ and O Zahm$^3$}

\address{$^1$ Department of Applied Mathematics and Computer Science, Technical University of Denmark, Richard Petersens Plads, Building 324, Kongens Lyngby, 2800, Denmark}
\address{$^2$ School of Engineering Sciences, Lappeenranta-Lahti University of Technology, Yliopistonkatu 34, Lappeenranta, 53850, Finland}
\address{$^3$ Univ. Grenoble Alpes, Inria, CNRS, Grenoble INP, LJK, 38000, Grenoble, France}

\ead{raff@dtu.dk}

\vspace{10pt}
\begin{indented}
	\item[]September 2024
\end{indented}

\begin{abstract}
	We focus on Bayesian inverse problems with Gaussian likelihood, linear forward model, and priors that can be formulated as a Gaussian mixture. 
	Such a mixture is expressed as an integral of Gaussian density functions weighted by a mixing density over the mixing variables. 
	Within this framework, the corresponding posterior distribution also takes the form of a Gaussian mixture, and we derive the closed-form expression for its posterior mixing density. 
	To sample from the posterior Gaussian mixture, we propose a two-step sampling method. 
	First, we sample the mixture variables from the posterior mixing density, and then we sample the variables of interest from Gaussian densities conditioned on the sampled mixing variables. 
	However, the posterior mixing density is relatively difficult to sample from, especially in high dimensions. 
	Therefore, we propose to replace the posterior mixing density by a dimension-reduced approximation, and we provide a bound in the Hellinger distance for the resulting approximate posterior. 
	We apply the proposed approach to a posterior with Laplace prior, where we introduce two dimension-reduced approximations for the posterior mixing density. 
	Our numerical experiments indicate that samples generated via the proposed approximations have very low correlation and are close to the exact posterior.

\end{abstract}

%
\vspace{2pc}
\noindent{\it Keywords}: Bayesian inverse problems, Gaussian mixture models, Laplace prior, sparsity, dimension reduction
%
%
%
%
\section{Introduction}
%
In Bayesian inverse problems, the posterior density on the unknown parameter $x$, given the data $y$, is formulated using Bayes' theorem as the product of the likelihood function $x\mapsto\pi(y|x)$ and the prior density $\pi(x)$ as
$$
\pi(x|y) \propto \pi(y|x) \pi(x).
$$
The likelihood function incorporates the forward model, the observed data and the assumed noise model. In this paper, we focus on likelihood functions with linear forward model and additive Gaussian noise (linear-Gaussian likelihood). In this setting, it is well-known that if the prior is Gaussian, the posterior is also Gaussian, e.g., \cite[p.78]{kaipioStatisticalComputationalInverse2005}. For other priors, oftentimes no closed-form expressions for the posterior exist. 

We express our non-Gaussian prior as a \emph{mixture model} \cite{lovricInternationalEncyclopediaStatistical2011} of the form
$$
\pi(x) = \int \pi(x|w) \pi(w) \diff{w},
$$
where $w$ is a \emph{mixing variable} with density $\pi(w)$. Gaussian mixture models correspond to the case where $\pi(x|w)$ is Gaussian, and they encompass the \emph{finite Gaussian mixture model} (known as GMM, e.g., \cite{gelmanBayesianDataAnalysis2013}) and \emph{Gaussian scale mixtures}, e.g., \cite{andrewsScaleMixturesNormal1974, westScaleMixturesNormal1987, gneitingNormalScaleMixtures1997, chuEstimationDecisionLinear1972}. 
In the Bayesian context, GMMs are a valuable tool to model multimodal priors and are successfully applied in practice, for example in geophysics to model properties of the subsurface \cite{grana*BayesianGaussianMixture2015} and in patch-based image denoising to model a prior on image patches \cite{houdardHighDimensionalMixtureModels2018}. Gaussian scale mixtures are used in Bayesian regression analysis to model hierarchical heavy-tailed error models and priors \cite{westOutlierModelsPrior1984, baeGeneSelectionUsing2004}. In particular, using a Gaussian scale mixture representation for the Laplace prior leads to the well-known Bayesian LASSO \cite{parkBayesianLasso2008d}. Hierarchical models based on Gaussian scale mixtures are also frequently used in connection with image reconstruction \cite{calvettiGaussianHypermodelRecover2007, calvettiHypermodelsBayesianImaging2008}. More recently, \cite{uribeHorseshoePriorsEdgePreserving2023} employed a Gaussian scale mixture to propose the extended horseshoe prior model, and \cite{senchukovaBayesianInversionStudent2024} used a Gaussian scale mixture representation of the Student's t-distribution to construct a hierarchical Markov random field prior. 

In the first part of this paper, we derive a general closed-form expression of the posterior as a continuous Gaussian mixture of the form
$$
\pi(x|y) = \int \pi(x|w,y) \pi(w|y) \diff{w},
$$
under the assumptions that the prior $\pi(x)$ is a Gaussian mixture, and the likelihood $\pi(y|x)$ is linear-Gaussian.
The Gaussian posterior mixture naturally gives rise to a simple algorithm to obtain samples from the posterior, which consists of two sampling steps. In the first step, one samples the mixing variable $w$ from the posterior mixing density $\pi(w|y)$, and in the second step one samples the Gaussian $\pi(x|w,y)$ conditioned on the previously sampled mixing variable. We propose to replace the posterior mixing density by an easy-to-sample and dimension-reduced approximation $\tilde\pi_r(w|y)$ and show that the Hellinger distance between $\pi(x|y)$ and the approximation
$$
\tilde\pi_r(x|y) = \int \pi(x|w,y) \tilde\pi_r(w|y) \diff{w}
$$ 
is bounded by the Hellinger distance between $\pi(w|y)$ and $\tilde\pi_r(w|y)$.

In the second part of this paper, we focus on applying our framework to the case of Laplace prior, which can be formulated as a Gaussian scale mixture. Besides its usage in Bayesian regression analysis, the Laplace prior can be employed in Bayesian inference for signal processing. This is because natural signals can be effectively represented in a sparse manner using adapted bases, such as point source bases and wavelets bases \cite{caiUncertaintyQuantificationRadio2018b}. Then, one can use the so-called \emph{synthesis formulation} $s=W x$ to expand a signal $s$ in a suitable basis $W$ for which $x$ is sparse \cite{eladAnalysisSynthesisSignal2007}. In this case, the heavy-tailed Laplace distribution is a typical prior choice to enforce sparsity in $x$.

For the case of Laplace prior, we formulate two dimension-reduced approximations $\tilde\pi_r(w|y)$.
The first one is obtained by employing the certified coordinate selection (CCS) method \cite{flockCertifiedCoordinateSelection2024}, and the second one is based on a maximum a posteriori probability (MAP) estimate similar to the Laplace approximation, see, e.g., \cite{murphyMachineLearningProbabilistic2012}.
We illustrate with two numerical experiments that both approximations lead to computationally efficient and nearly uncorrelated samples from the approximate posterior $\tilde\pi_r(x|y)$.
Moreover, the estimates are close to our reference solutions.

\paragraph{\texorpdfstring{\bf Contributions.}{Contributions}}
Our main contributions are as follows.
\begin{itemize}
	\item We derive a closed-form expression for continuous Gaussian posterior mixtures resulting from posteriors with a linear-Gaussian likelihood and a Gaussian prior mixture.
	\item 
	Leveraging the Gaussian posterior mixture formulation, we describe two easy-to-sample approximations to a posterior with Laplace prior, where we replace the posterior mixing density $\pi(w|y)$ with a dimension-reduced approximation $\tilde\pi_r(w|y)$.
	\item We numerically illustrate the benefits of the Gaussian mixture formulation by applying our dimension-reduced posterior approximations to a 1D deblurring example and a 2D high-dimensional super-resolution example with Laplace priors.
\end{itemize}

\paragraph{\texorpdfstring{\bf Outline.}{Outline}}
In Section \ref{sec:gen_form}, we derive the closed-form expression of the Gaussian posterior mixture for posteriors with a linear-Gaussian likelihood and a Gaussian prior mixture. In Section \ref{sec:app_lap}, we focus on the Laplace prior and the corresponding Gaussian posterior mixture. Here we outline two dimension-reduced approximations $\tilde\pi_r(w|y)$ for the posterior mixing density $\pi(w|y)$. In Section \ref{sec:num_exp}, we test our approximations from the previous section on two synthetic examples: A 1D deblurring example and a 2D high-dimensional super-resolution example. We draw our conclusions in Section \ref{sec:conc}.
\section{General framework}\label{sec:gen_form}
In this section, under a certain assumption, we first introduce the general posterior mixture, which can be formulated given any likelihood function and prior mixture. Subsequently, we derive the closed-form expression of the Gaussian posterior mixture resulting from a linear-Gaussian likelihood and a Gaussian prior mixture. We end the section with a simple algorithm for sampling the Gaussian posterior mixture.
\subsection{Posterior mixture}
Bayesian inverse problems are formulated via the posterior probability density
\begin{equation}\label{eq:gen_pos}
	\pi(x|y) = \frac{\pi(y|x) \pi(x)}{\pi(y)},
\end{equation}
where $x\in\mathbb{R}^d$ is the vector of unknown parameters and $y\in\mathbb{R}^m$ is the observed data. As discussed in the introduction, $x\mapsto \pi(y|x)$ is the likelihood function of observing the data $y$, $\pi(x)$ is the prior density, and $\pi(y)$ is the data density, which for some fixed $y$ is sometimes called the \emph{(model) evidence}.
\begin{definition}[Mixture \cite{lovricInternationalEncyclopediaStatistical2011}]\label{def:mix}
	A density $\pi(x)$ defined as
	\begin{eqnarray}\label{eq:mix_model}
		\pi(x) = \int \pi(x|w) \pi(w) \diff{w},
	\end{eqnarray}
	is a \emph{mixture} with \emph{component density} $\pi(x|w)$, \emph{mixing density} $\pi(w)$ and \emph{mixing variable} $w$.
\end{definition}

Before we state our general posterior mixture, we need the assumption that $w$ and $y$ are conditionally independent given $x$.
\begin{assumption}\label{ass:gen_pos_mix_ass}
	The data $y$ and the mixing variable $w$ are conditionally independent given the parameter $x$, meaning
	\begin{equation}\label{eq:gen_pos_mix_ass}
		\pi(y,w|x) = \pi(y|x)\pi(w|x).
	\end{equation}
\end{assumption}
The random variable $w$ can be considered to be a hyperparameter, and thus only a property of the prior density on $x$. 
Therefore, it is natural to assume that given a realization of $x$, the data $y$ and $w$ should be independent.
Under Assumption \ref{ass:gen_pos_mix_ass}, the joint density of $(x,y,w)$ is properly defined as the product $\pi(x,y,w)  = \pi(y|x)\pi(x|w)\pi(w)$.
In this case, we can represent the posterior $\pi(x|y)$ as a mixture as follows.
\begin{proposition}[Posterior mixture]\label{prop:pos_mix}
	Under Assumption \ref{ass:gen_pos_mix_ass}, we can formulate the posterior mixture
	\begin{equation}\label{eq:gen_pos_mix}
		\pi(x|y) = \int \pi(x|w,y) \pi(w|y) \diff{w},
	\end{equation}
	where
	\begin{eqnarray}
		\pi(x|w,y) &\propto \pi(y|x)\pi(x|w) \label{eq:gen_pos_mix_1}\\
		\pi(w|y) &\propto \pi(w)\int \pi(y|x')\pi(x'|w) \diff{x'} \label{eq:gen_pos_mix_2}
	\end{eqnarray}
	are respectively called posterior component density and posterior mixing density.
\end{proposition}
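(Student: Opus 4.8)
The plan is to reduce the statement to elementary manipulations of the joint density $\pi(x,y,w)$, using Assumption~\ref{ass:gen_pos_mix_ass} only to justify its factorization. First I would record that, under~(\ref{eq:gen_pos_mix_ass}), the joint density factors as $\pi(x,y,w) = \pi(y|x)\pi(x|w)\pi(w)$, exactly as asserted in the text preceding the proposition: writing $\pi(x,y,w) = \pi(y,w|x)\pi(x)$, splitting $\pi(y,w|x) = \pi(y|x)\pi(w|x)$ by conditional independence, and applying the prior-level identity $\pi(w|x)\pi(x) = \pi(x|w)\pi(w)$ gives the claim.

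Next I would establish the mixture representation~(\ref{eq:gen_pos_mix}) itself, which is a pure consequence of the laws of total and conditional probability and uses none of the explicit formulas. Marginalizing the joint posterior over $w$ yields $\pi(x|y) = \int \pi(x,w|y)\,\diff{w}$, and factoring the integrand by the chain rule, $\pi(x,w|y) = \pi(x|w,y)\pi(w|y)$, reproduces~(\ref{eq:gen_pos_mix}).

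It then remains to identify the two factors with the claimed expressions by applying Bayes' rule and discarding whatever is constant in the relevant variable. For the posterior component density, $\pi(x|w,y) = \pi(x,y,w)/\pi(y,w)$; substituting the factorization and dropping $\pi(w)$ and $\pi(y,w)$, which do not depend on $x$, gives~(\ref{eq:gen_pos_mix_1}). For the posterior mixing density, $\pi(w|y) \propto \pi(w,y)$ as a function of $w$, and marginalizing the joint over $x'$ gives $\pi(w,y) = \pi(w)\int \pi(y|x')\pi(x'|w)\,\diff{x'}$, which is~(\ref{eq:gen_pos_mix_2}).

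The step I would treat as the crux is checking that the two proportionality constants are mutually consistent, so that the product $\pi(x|w,y)\pi(w|y)$ integrates to the \emph{exact} posterior rather than merely something proportional to it. Setting $Z(w,y) = \int \pi(y|x')\pi(x'|w)\,\diff{x'}$, the normalizing constant of~(\ref{eq:gen_pos_mix_1}) in $x$ is precisely $Z(w,y)$, while this same quantity multiplies the right-hand side of~(\ref{eq:gen_pos_mix_2}); in the product $\pi(x|w,y)\pi(w|y)$ the two occurrences of $Z(w,y)$ cancel, leaving $\pi(y|x)\pi(x|w)\pi(w)/\pi(y)$. Integrating over $w$ and invoking Fubini to pull $\pi(y|x)$ outside the integral recovers $\pi(y|x)\pi(x)/\pi(y) = \pi(x|y)$, confirming both identities together with their normalization. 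I would emphasize that this cancellation is exactly what makes the subsequent two-step sampling scheme exact.
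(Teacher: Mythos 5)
Your proof is correct and follows essentially the same route as the paper's: both factor the joint density via Assumption~\ref{ass:gen_pos_mix_ass} and then read off~(\ref{eq:gen_pos_mix_1}) and~(\ref{eq:gen_pos_mix_2}) from Bayes' rule and marginalization over $x'$. The only difference is that you additionally make explicit the chain-rule decomposition behind~(\ref{eq:gen_pos_mix}) and verify that the normalizing constant $Z(w,y)$ cancels in the product, details the paper leaves implicit.
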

\begin{proof}
	See \ref{proof:pos_mix}.
\end{proof}
The posterior mixture formulation (\ref{eq:gen_pos_mix}) can be used to sample from $\pi(x|y)$ by first drawing $w^{(i)}\sim\pi(w|y)$ and then $x^{(i)}\sim\pi(x|w^{(i)},y)$. While sampling from $\pi(x|w^{(i)},y)$ in (\ref{eq:gen_pos_mix_1}) is straightforward when the likelihood $\pi(y|x)$ and the prior component $\pi(x|w)$ are \emph{conjugate distributions}, sampling from $\pi(w|y)$ can be difficult in general. Therefore, we consider replacing $\pi(w|y)$ with an easy-to-sample approximation $\tilde\pi(w|y)$. The following proposition shows that we can control the resulting approximate posterior $\tilde\pi(x|y)$ if we can control $\tilde\pi(w|y)$ in the Hellinger distance.
\begin{proposition}[Certified approximate posterior mixture]\label{prop:bound_pos_mix}
	Let $\tilde\pi(w|y)$ be an approximation to $\pi(w|y)$ in (\ref{eq:gen_pos_mix_2}), and let $\tilde\pi(x|y) = \int \pi(x|w,y) \tilde\pi(w|y) \diff{w}$ be the resulting approximate posterior. By controlling the Hellinger distance
	\begin{equation*}
		\Hell{\pi(w|y)}{\tilde\pi(w|y)}^2 = \frac{1}{2} \int \left( \sqrt{ \pi(w|y) }-\sqrt{ \tilde\pi(w|y) } \right)^2 \diff{w} \leq \epsilon,
	\end{equation*}
	for a given $\epsilon>0$, we can ensure 	\begin{equation}\label{eq:Hell_pos_mix}
		\Hell{\pi(x|y)}{\tilde\pi(x|y)}^2 \leq \epsilon.
	\end{equation}
\end{proposition}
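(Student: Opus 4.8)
The plan is to recognize that both $\pi(x|y)$ and $\tilde\pi(x|y)$ are produced by applying the \emph{same} Markov transition kernel $k(x|w) := \pi(x|w,y)$ to the two mixing densities $\pi(w|y)$ and $\tilde\pi(w|y)$. Thus the statement is an instance of the data-processing inequality for the Hellinger distance: pushing two probability distributions through a common stochastic kernel cannot increase their Hellinger distance. First I would convert everything into the Hellinger affinity. Expanding the square in the definition and using that the densities integrate to one gives the identity
\[
\Hell{p}{q}^2 = 1 - \int \sqrt{p(z)\,q(z)}\,\diff{z},
\]
so that the claimed bound (\ref{eq:Hell_pos_mix}) is equivalent to the affinity inequality
\[
\int \sqrt{\pi(x|y)\,\tilde\pi(x|y)}\,\diff{x} \;\ge\; \int \sqrt{\pi(w|y)\,\tilde\pi(w|y)}\,\diff{w}.
\]

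The core of the argument is a pointwise (in $x$) application of the Cauchy--Schwarz inequality. Fixing $x$ and writing $a(w) = k(x|w)\,\pi(w|y)$ and $b(w) = k(x|w)\,\tilde\pi(w|y)$, I would use $\int \sqrt{a\,b}\,\diff{w} \le \sqrt{\int a\,\diff{w}}\,\sqrt{\int b\,\diff{w}}$ together with $\sqrt{a(w)\,b(w)} = k(x|w)\sqrt{\pi(w|y)\,\tilde\pi(w|y)}$ (the nonnegative factor $k(x|w)$ pulls out of the square root) to obtain
\[
\sqrt{\pi(x|y)\,\tilde\pi(x|y)} \;\ge\; \int k(x|w)\,\sqrt{\pi(w|y)\,\tilde\pi(w|y)}\,\diff{w}.
\]

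I would then integrate this inequality over $x$ and exchange the order of integration, which is justified by Tonelli's theorem since all integrands are nonnegative, to get
\[
\int \sqrt{\pi(x|y)\,\tilde\pi(x|y)}\,\diff{x} \;\ge\; \int \sqrt{\pi(w|y)\,\tilde\pi(w|y)}\left(\int k(x|w)\,\diff{x}\right)\diff{w}.
\]
The final step uses that $k(x|w) = \pi(x|w,y)$ is a normalized probability density in $x$ for each fixed $w$ (by Proposition \ref{prop:pos_mix}), so the inner integral equals $1$; this is exactly where the shared-kernel structure enters. This establishes the affinity inequality, and translating back through the affinity identity yields $\Hell{\pi(x|y)}{\tilde\pi(x|y)}^2 \le \Hell{\pi(w|y)}{\tilde\pi(w|y)}^2 \le \epsilon$.

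I expect no genuine analytic obstacle here; the whole proof reduces to one inequality plus Tonelli's theorem. The only points requiring care are bookkeeping: applying Cauchy--Schwarz in the correct direction (to the product of the two integrals defining $\pi(x|y)$ and $\tilde\pi(x|y)$, rather than inside a single integral), and correctly identifying that the kernel $k(x|w)$ integrates to one in $x$ but not in $w$, so that the mass is preserved precisely along the variable being marginalized.
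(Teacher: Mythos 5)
Your proposal is correct and follows essentially the same route as the paper's own proof: both rewrite the squared Hellinger distance as one minus the affinity, apply the Cauchy--Schwarz inequality to bound $\sqrt{\pi(x|y)\,\tilde\pi(x|y)}$ from below by $\int \pi(x|w,y)\sqrt{\pi(w|y)\,\tilde\pi(w|y)}\,\diff{w}$, and then swap the order of integration and use that $\pi(x|w,y)$ integrates to one in $x$. Your framing as a data-processing inequality for a shared transition kernel is a pleasant conceptual packaging, but the underlying computation is identical.
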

\begin{proof}
	See \ref{proof:hell_bound}.
\end{proof}
\subsection{Gaussian posterior mixture}
We now consider the case where the component density $\pi(x|w)$ is Gaussian.
\begin{definition}[Gaussian mixture]\label{def:gauss_mix}
	A Gaussian mixture $\pi(x)$ is a mixture as in Definition \ref{def:mix} with Gaussian component density $\pi(x|w)$, meaning
	\begin{equation}\label{eq:gauss_mix_model}
		\pi(x|w) \propto \exp\left(-\frac12 \| x-\mupr(w) \|_{\sigpr(w)^{-1}}^2 \right)  .
	\end{equation}
	Here $\mupr(w)$ is the component mean vector and $\sigpr(w)$ is the component covariance matrix. Specifically, if $\mupr(w)=0$, the mixture is called a Gaussian scale mixture \cite{andrewsScaleMixturesNormal1974, gneitingNormalScaleMixtures1997}.
\end{definition}
\begin{example}[Gaussian mixture model (GMM)]\label{ex:GMM}
	In the literature, the term Gaussian mixture model (GMM) is commonly used for Gaussian mixtures with a finite number of Gaussian component densities, e.g., \cite{murphyMachineLearningProbabilistic2012}.
	In this case, the mixing density $\pi(w) = \sum_{i=1}^N \delta_{i}(w) p_i$ reduces to a sum of Dirac masses with weights $p_i\geq0$, $\sum_{i=1}^N p_i=1$, such that
	\begin{equation*}
		\pi(x) = \sum_{i=1}^N \frac{1}{\sqrt{2\pi\det\Sigma_i}} \exp\left(-\frac12 \| x-\mu_i \|_{\Sigma_i^{-1}}^2 \right) p_i.
	\end{equation*}
\end{example}
\begin{example}[Laplace distribution]\label{ex:Laplace}
	The univariate probability density function of the Laplace distribution $\pi(x)  = \frac{\delta}{2} \exp\left( -\delta |x| \right)$ with zero mean and rate parameter $\delta>0$ can be written as Gaussian scale mixture \cite{andrewsScaleMixturesNormal1974}
	\begin{equation}\label{eq:un_lap_mix}
		\pi(x)  = \int_{\mathbb{R}_{>0}} \frac{1}{\sqrt{2\pi w}} \exp\left( -\frac{1}{2} \frac{x^2}{w} \right) \pi(w) \diff{w} ,
	\end{equation}
	with the exponential mixing density $\pi(w)=\lambda \exp( -\lambda w )$ with rate parameter $\lambda=\delta^2/2$.
\end{example}
\begin{example}[Student's t-distribution]\label{ex:Student}
	The univariate probability density function of the Student's t-distribution $\pi(x) \propto (1 + \frac{t^2}{\nu} )^{-(\nu+1)/2} $ with $\nu$ degrees of freedom can also be written as Gaussian scale mixture (\ref{eq:un_lap_mix}).
	It has an inverse-gamma mixing density $\pi(w)\propto w^{-\alpha-1} \exp( -\frac{\beta}{w} )$ with shape parameter $\alpha=\nu/2$ and rate parameter $\beta=\nu/2$ \cite{andrewsScaleMixturesNormal1974}.
\end{example}
Now let the likelihood $\pi(y|x)$ be linear-Gaussian and given by
\begin{equation}\label{eq:lin_gauss_like}
	\pi(y|x) \propto \exp\left( -\frac12 \|Ax-y\|_{\sigobs^{-1}}^2 \right) ,
\end{equation}
with $A\in\mathbb{R}^{m\times d}$ and a positive definite covariance matrix $\sigobs\in\mathbb{R}^{m\times m}$. 
This likelihood corresponds to the data-generating model $y=Ax+\varepsilon$, where $\varepsilon\sim\mathcal{N}(0,\sigobs)$ is Gaussian noise. 
A posterior with the linear-Gaussian likelihood (\ref{eq:lin_gauss_like}) and a prior formulated as Gaussian mixture (\ref{eq:gauss_mix_model}) can be formulated as Gaussian posterior mixture. We give the explicit expressions for the posterior component density and the posterior mixing density in the following proposition.
\begin{proposition}[Gaussian posterior mixture]\label{prop:Gauss_pos_mix}
	Assume a linear-Gaussian likelihood $\pi(y|x)$ as in (\ref{eq:lin_gauss_like}) and a Gaussian prior mixture $\pi(x)$ as in Definition \ref{def:gauss_mix}. Under Assumption \ref{ass:gen_pos_mix_ass}, the posterior $\pi(x|y)$ is a Gaussian mixture $\pi(x|y) = \int \pi(x|w,y) \pi(w|y)\diff{w}$ with Gaussian component density 
	\begin{equation}\label{eq:pos_comp}
		\pi(x|w,y) \propto \exp\left(-\frac{1}{2}\|x-\mu(w,y)\|_{\Sigma(w)^{-1}}^2\right),
	\end{equation}
	where
	\begin{eqnarray}
		\sigpo(w)^{-1} &= A^\mathsf{T}\sigobs^{-1}A + \sigpr(w)^{-1} \label{eq:conpo_sig} \\
		\mupo(w,y) &= \sigpo(w)\left( A^\mathsf{T} \sigobs^{-1}y + \sigpr(w)^{-1} \mupr(w) \right), \label{eq:conpo_mean}
	\end{eqnarray}
	and with the mixing density
	\begin{equation}\label{eq:pos_mixing}
		\pi(w|y) \propto \pi(y|w) \pi(w),
	\end{equation}
	where
	\begin{equation}\label{eq:pos_mixing_like}
		\pi(y|w) \propto \sqrt{\frac{\det \sigpo(w)}{\det \sigpr(w)}} \exp\left(\frac12 \| \mupo(w,y) \|_{\sigpo(w)^{-1}}^2 - \frac12 \| \mupr(w) \|_{\sigpr(w)^{-1}}^2 \right).
	\end{equation}
\end{proposition}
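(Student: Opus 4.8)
The plan is to build directly on Proposition~\ref{prop:pos_mix}, which already establishes the structural decomposition $\pi(x|y)=\int\pi(x|w,y)\pi(w|y)\diff{w}$ together with the proportionalities (\ref{eq:gen_pos_mix_1}) and (\ref{eq:gen_pos_mix_2}). The work then splits into two Gaussian computations: recognizing $\pi(x|w,y)\propto\pi(y|x)\pi(x|w)$ as a Gaussian to obtain (\ref{eq:pos_comp})--(\ref{eq:conpo_mean}), and evaluating the marginal $\pi(y|w)=\int\pi(y|x')\pi(x'|w)\diff{x'}$ in closed form to obtain (\ref{eq:pos_mixing_like}).

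For the component density, I would substitute the linear-Gaussian likelihood (\ref{eq:lin_gauss_like}) and the Gaussian component (\ref{eq:gauss_mix_model}) into (\ref{eq:gen_pos_mix_1}); the resulting log-density is the quadratic
\[
-\frac12\norm{Ax-y}_{\sigobs^{-1}}^2-\frac12\norm{x-\mupr(w)}_{\sigpr(w)^{-1}}^2
\]
in $x$. Collecting the second-order term gives the precision $A^\mathsf{T}\sigobs^{-1}A+\sigpr(w)^{-1}$, which is $\sigpo(w)^{-1}$ as in (\ref{eq:conpo_sig}); matching the first-order term identifies the mean $\mupo(w,y)=\sigpo(w)(A^\mathsf{T}\sigobs^{-1}y+\sigpr(w)^{-1}\mupr(w))$ as in (\ref{eq:conpo_mean}). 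Completing the square yields (\ref{eq:pos_comp}) and simultaneously the scalar identity
\[
\norm{Ax-y}_{\sigobs^{-1}}^2+\norm{x-\mupr(w)}_{\sigpr(w)^{-1}}^2=\norm{x-\mupo(w,y)}_{\sigpo(w)^{-1}}^2+r(w,y),
\]
whose $x$-independent remainder $r(w,y)=y^\mathsf{T}\sigobs^{-1}y+\norm{\mupr(w)}_{\sigpr(w)^{-1}}^2-\norm{\mupo(w,y)}_{\sigpo(w)^{-1}}^2$ I would carry into the next step.

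For the mixing density, (\ref{eq:gen_pos_mix_2}) gives $\pi(w|y)\propto\pi(w)\pi(y|w)$, so only $\pi(y|w)$ remains. Writing both factors in the integrand as normalized Gaussians, substituting the square-completion identity above, and using that the leftover integrand is a normalized Gaussian in $x'$ with covariance $\sigpo(w)$, the integral over $x'$ evaluates to $\sqrt{(2\pi)^d\det\sigpo(w)}$. The $(2\pi)^d$ factors cancel against the normalization of the prior component, leaving a determinant ratio $\sqrt{\det\sigpo(w)/\det\sigpr(w)}$ times $\exp(-\frac12 r(w,y))$. Discarding the factors independent of $w$ --- namely $((2\pi)^m\det\sigobs)^{-1/2}$ and $\exp(-\frac12 y^\mathsf{T}\sigobs^{-1}y)$, which are constant since $y$ is fixed --- into the proportionality constant reproduces exactly (\ref{eq:pos_mixing_like}), and hence (\ref{eq:pos_mixing}).

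The computation is routine Gaussian algebra throughout; the one place that demands care --- and the main source of error --- is the bookkeeping of the $(2\pi)$ powers and determinant factors when passing to the proportionality, where one must retain precisely the $w$-dependent factors while absorbing every $w$-independent one into the normalization constant.
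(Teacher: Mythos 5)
Your proposal is correct and follows essentially the same route as the paper's proof: invoke Proposition~\ref{prop:pos_mix}, read off the precision and mean of the Gaussian $\pi(x|w,y)$ by collecting quadratic and linear terms, and obtain $\pi(y|w)$ by completing the square and integrating the resulting Gaussian in $x'$, with the $x$-independent remainder and determinant ratio giving (\ref{eq:pos_mixing_like}). The remainder $r(w,y)$ you carry through matches the paper's exactly, and your bookkeeping of the $(2\pi)$ and determinant factors is the same as theirs.
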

\begin{proof}
	See \ref{proof:Gauss_pos_mix}.
\end{proof}

We notice that $\pi(y|w)$ in (\ref{eq:pos_mixing_like}) does not depend on the mixing density 
$\pi(w)$. 
This means that we can easily adapt the posterior mixture 
to any choice of Gaussian prior mixture by replacing $\pi(w)$ in (\ref{eq:pos_mixing}), e.g., using the Gaussian mixtures in Examples \ref{ex:Laplace} and \ref{ex:Student}.
\subsection{General sampling algorithm}
We can draw samples $\{x^{(i)}\}_{i\geq1}$ from the posterior mixture (\ref{eq:gen_pos_mix}) as shown in Algorithm \ref{alg:gen}. 
This is because Algorithm \ref{alg:gen} yields samples $\{(w^{(i)},x^{(i)})\}_{i\geq1}$ from the joint probability density $\pi(w,x|y) = \pi(x|w,y)\pi(w|y)$, whose marginal with respect to $x$ is precisely $\pi(x|y)$.

\begin{algorithm}
	\For{$i=1$ \KwTo $N$}{
		Sample $w^{(i)}$ from the posterior mixing density $\pi(w|y)$ \label{algline:gen_draw_hard_dens}.\;
		Sample $x^{(i)}$ from the Gaussian posterior component density $\pi(x|w^{(i)},y)$\label{algline:gen_draw_eas_dens}.\;
	}
	Return samples $\{x^{(i)}\}_{i=1}^N$.
	\caption{General algorithm to sample the posterior mixture density (\ref{eq:gen_pos_mix})}
	\label{alg:gen}
\end{algorithm}

Sampling from the posterior component density $\pi(x|w,y)$ in the second step is straightforward since it is Gaussian, see Proposition \ref{prop:Gauss_pos_mix}. Moreover, once samples $w^{(i)}\sim\pi(w|y)$ are available, the second sampling step can be parallelized. In fact, efficient sampling from the Gaussian posterior mixture hinges on the ability of sampling from the posterior mixing density $\pi(w|y)$ in (\ref{eq:pos_mixing}) in the first step. In the following, we discuss the two sampling steps in detail.
\subsubsection{Dimension reduction in the posterior mixing density.}\label{sec:sam_pos_mixing}
Sampling the posterior mixing density $\pi(w|y)$ can be computationally expensive, since its evaluation requires the computation of determinants and solutions to linear systems of equations. 
This is further exacerbated when the dimensions of the parameter $x$ and the mixing component $w$ are high.
For these reasons, we propose to replace $\pi(w|y)$ by an approximation $\tilde\pi_r(w|y)$ obtained via the certified dimension reduction (CDR) method introduced in \cite{zahmCertifiedDimensionReduction2022b}, see also \cite{cui2022unified,li2023principal}. 
Note that other approximations can be used, for example, a Laplace approximation, see, e.g., \cite{murphyMachineLearningProbabilistic2012}, or the prior mixing density $\pi(w)$.

Applying the CDR method to the posterior mixing density $\pi(w|y)$ in (\ref{eq:pos_mixing}) consists in replacing $\pi(y|w)$ in (\ref{eq:pos_mixing_like}) by a \emph{ridge} approximation $w\mapsto \tilde{\pi}(y|U_r^\mathsf{T} w)$.
The matrix $U_r\in\mathbb{R}^{d\times r}$ is pre-determined and has $r\ll d$ orthogonal columns, where $d$ is the dimension of the mixing variable $w$ (in our experiments, both $w$ and $x$ are of dimension $d$).
This way, the effective dimension $r$ of the approximation is much smaller than the original, which reduces the computational cost significantly. 
Consequently, the approximated posterior mixing density takes the form of
$
\tilde{\pi}(w|y) \propto \tilde{\pi}_r(y|U_r^\mathsf{T} w) \pi(w),
$
where $U_r$ is constructed by minimizing an upper-bound of the Kullback--Leibler (KL) divergence of $\pi(w|y)$ from $\tilde{\pi}(w|y)$. 
This procedure has been generalized to the broader class of $\alpha$-divergences in \cite{li2023principal}. 

However, the theoretical foundations of the CDR method requires the mixing density $\pi(w)$ to satisfy a logarithmic Sobolev inequality, which is trivially satisfied by the Gaussian $\pi(w)\propto\exp(-\|w\|^2/2)$, but not by, e.g., the exponential density $\pi(w)\propto\exp(-\lambda w)$ associated with the Laplace prior, see Example \ref{ex:Laplace}. One way to circumvent this issue is to consider a transformation $T$, which pushes forward $\pi(w)$ to a standard normal random variable $u$ and to apply the CDR method on the transformed component $u=T(w)$, see \cite{cuiPriorNormalizationCertified2022c,verdiere2023diffeomorphism}.

We note that the Hellinger distance squared is bounded by the KL divergence, and thus, Proposition \ref{prop:bound_pos_mix} can be used to bound the approximated posterior mixture $\tilde{\pi}(x|y) = \int \pi(x|w,y) \tilde{\pi}(w|y) \diff{w}$, where $\tilde{\pi}(w|y)$ is obtained through the CDR method. 
In Section \ref{sec:pos_mixCCS}, we apply a specialized version of the CDR method, the certified coordinate selection (CCS) \cite{flockCertifiedCoordinateSelection2024}, to the case where the prior is Laplace and thus can be written as a Gaussian scale mixture. 

We conclude this section with a practical result from \cite{flockCertifiedCoordinateSelection2024}. 
Regardless of the approximation method, we can estimate the Hellinger distance based on samples $w^{(i)}\sim\tilde\pi(w|y)$ as follows:
\begin{equation*}\label{eq:num_Hell}\fl
	\Hell{\pi(w|y)}{\tilde\pi(w|y)}^2 
	\leq 2 \int \left( \sqrt{ \frac{\pi_u(w|y)}{\tilde\pi_u(w|y)} } - 1 \right)^2 \tilde\pi(w|y) \diff{w} 
	\approx \frac{2}{N} \sum_{i=1}^N \left( \sqrt{ \frac{\pi_u(w^{(i)}|y)}{\tilde\pi_u(w^{(i)}|y)} } - 1 \right)^2.
\end{equation*}
Here we use a subscript ``$u$'' to denote unnormalized densities, such that $\pi_u(w|y) \propto \pi(w|y)$, and $\tilde\pi_u(w|y) \propto \tilde\pi(w|y)$. 
This way, the above bound does not require the computation of normalizing constants.
We refer to Appendix A.3 of \cite{flockCertifiedCoordinateSelection2024} for a proof of the above inequality.

\subsubsection{Sampling the posterior component density.}\label{sec:sam_po_comp}
Sampling the posterior component density $\pi(x|w^{(i)},y)$ in (\ref{eq:pos_comp}) consists in drawing samples from the Gaussian distribution $\mathcal{N}(\mupo(w^{(i)},y), \sigpo(w^{(i)}))$. 
The standard way for exact sampling is via the Cholesky decomposition of the covariance matrix. 
However, if $\sigpo(w^{(i)})$ in (\ref{eq:conpo_sig}) is high-dimensional, it is numerically expensive to perform a single Cholesky decomposition for each sample $w^{(i)}$. 
Conversely, the Cholesky decomposition of the prior component covariance matrix $\sigpr(w)$ is oftentimes relatively easy to compute, e.g., when it has diagonal structure. 
In this case, the linear randomize-then-optimize (RTO) method proposed in \cite{bardsleyMCMCBasedImageReconstruction2012, bardsleyRandomizeThenOptimizeMethodSampling2014} can be used to obtain samples by computing the Cholesky decomposition of $\sigpr(w^{(i)})$ and solving linear systems of equations with the system matrix $\sigpo(w^{(i)})^{-1}$.
Moreover, the linear system can be reformulated as a linear least squares problem, which is better conditioned.
We describe the methodology in detail in \ref{proof:lin_RTO}.

In fact, sampling from a high-dimensional Gaussian distribution is a well explored field of research. Many methods which are tailored to specific problem structures are available. 
We refer to \cite{vonoHighDimensionalGaussianSampling2022} for an overview and references therein. 
In particular, taking into account the structure of $\sigpo(w)^{-1}$, a Gibbs sampler based on exact (GEDA) or approximate data augmentation (Approx. DA) can be used as an alternative to linear RTO.
\section{Application to Laplace prior}\label{sec:app_lap}
In this section, we employ Algorithm \ref{alg:gen} to sample from a posterior density defined by a linear-Gaussian likelihood and a Laplace prior. We obtain the Gaussian posterior mixture immediately by applying Proposition \ref{prop:Gauss_pos_mix}. 
Moreover, by employing the CCS method \cite{flockCertifiedCoordinateSelection2024}, we formulate two easy-to-sample and dimension-reduced approximate posterior mixing densities $\tilde\pi_r(w|y)$, which can be used to replace the relatively complicated density $\pi(w|y)$ in step \ref{algline:gen_draw_hard_dens} of Algorithm \ref{alg:gen}. 
\subsection{Problem formulation}
We aim at sampling from a posterior distribution defined by a linear-Gaussian likelihood (\ref{eq:lin_gauss_like}) and a product-form Laplace prior:
\begin{equation}\label{eq:lap}
	\pi(x) \propto \exp\left( -\sum_{i=1}^d \delta_i |x_i| \right),
\end{equation}
where $\delta_i>0$ for $i=1, \cdots, d$ are the rate parameters. The posterior reads
\begin{equation}\label{eq:pos_lap_pr}
	\pi(x|y) \propto \exp\left( -\frac12 \|Ax-y\|_{\sigobs^{-1}}^2 \right) \exp\left( -\sum_{i=1}^d \delta_i |x_i| \right).
\end{equation}

Recall from Example \ref{ex:Laplace} that the mixing density of a univariate Laplace distribution is an exponential density. Thus, the Gaussian scale mixture representation of (\ref{eq:lap}) has the Gaussian component density $\pi(x|w)$ in (\ref{eq:gauss_mix_model}) and the mixing density
\begin{equation}\label{eq:exp_mixing_pr}
	\pi(w) \propto \exp\left( -\sum_{i=1}^{d} \lambda_i w_i \right)
\end{equation}
with $w\in\mathbb{R}^d_{>0}$ and rate parameters $\lambda_i = \delta_i^2/2$. 
Since (\ref{eq:lap}) has product-form, we have $\sigpr(w)=\Lambda_{w}$, which denotes a diagonal matrix with $w$ on the main diagonal. 
Note that $\mupr(w)=0$. 
By (\ref{eq:pos_mixing}) we can directly formulate $\pi(w|y)$, where $\pi(w)$ is given by (\ref{eq:exp_mixing_pr}).

\subsection{Dimension-reduced posterior mixing density}
As discussed in Section \ref{sec:sam_pos_mixing}, direct sampling from $\pi(w|y)$ in (\ref{eq:pos_mixing}) is a difficult task.
One option for a dimension-reduced and controlled approximation $\tilde\pi(w|y)$ is to apply the CDR method combined with a transformation of random variables as described in Section \ref{sec:sam_po_comp}.
However since the prior mixing density $\pi(w)$ is exponential and has product-form, it is convenient to apply the CCS method \cite{flockCertifiedCoordinateSelection2024}, a variant of the CDR method.
In the following sections, we first apply the CCS method to find a controlled and easy-to-sample approximation to $\pi(w|y)$.
Then, following the principles of the CCS method, we propose a second dimension-reduced approximation, which is based on a MAP estimate of $\pi(w|y)$.
\subsubsection{CCS-approximated posterior mixing density.}\label{sec:pos_mixCCS}
The CCS method is a special variant of the CDR method as it is developed for the particular case of product-form Laplace or exponential priors.
Here, we employ it to obtain a dimension-reduced approximation $\tilde\pi^\mathrm{CCS}_r(w|y)$ for $\pi(w|y)$. 
Compared to the CDR method, which we briefly introduced in Section \ref{sec:sam_pos_mixing}, in the CCS method the projection matrix $U_r$ is restricted to be a coordinate selection operator, so that $U_r^\mathsf{T} w=w_\mathcal{I}$, where $\mathcal{I}\subseteq\{1,...,d\}$ is the index set of the $r=|\mathcal{I}|$ \emph{selected coordinates}.
Consequently, we denote the set of \emph{not selected coordinates} by $w_\mathcal{J}$ with $\mathcal{J}=\{1,\dots,d\}\backslash \mathcal{I}$. 
We define such a coordinate splitting by a permutation matrix $P\in\mathbb{R}^{d\times d}$:
\begin{equation}\label{eq:coord_spl}
	Pw = \left[\begin{array}{ll}
		w_\mathcal{I}\\ 
		w_\mathcal{J}
	\end{array}\right].
\end{equation}

CCS approximates a posterior density with Laplace or exponential prior by replacing its likelihood with a dimension-reduced ridge approximation on the $r (\ll d)$ selected coordinates. 
This allows the approximation of the posterior mixing density $\pi(w|y)$ in (\ref{eq:pos_mixing}) by
\begin{equation}\label{eq:pos_mix_CCS}
	\tilde\pi_r^{\mathrm{CCS}}(w|y) \propto \tilde\pi_r^{\mathrm{CCS}}(w_\mathcal{I}|y) \pi(w_\mathcal{J}),
\end{equation}
where $\tilde\pi_r^{\mathrm{CCS}}(w_\mathcal{I}|y)$ is an approximate marginal posterior mixing density given by
\begin{equation}\label{eq:pos_mix_CCS_marg}
	\tilde\pi_r^{\mathrm{CCS}}(w_\mathcal{I}|y) \propto \tilde\pi_r(y|w_\mathcal{I}) \pi(w_\mathcal{I})
\end{equation}
with
\begin{equation}\label{eq:red_like_pos_mixing}
	\tilde\pi_r(y|w_\mathcal{I}) =  \left( \int \sqrt{ \pi(y|w_\mathcal{I},w_\mathcal{J}) } \pi(w_\mathcal{J}) \diff{w_\mathcal{J}} \right)^2.
\end{equation}
We denote the posterior approximation which results from replacing $\pi(w|y)$ by $\CCSmix$ by
\begin{equation}\label{eq:pos_CCS}
	\tilde\pi_r^{\mathrm{CCS}}(x|y)=\int \pi(x|w,y) \tilde\pi_r^{\mathrm{CCS}}(w|y) \diff{w}.	
\end{equation}
According to \cite{flockCertifiedCoordinateSelection2024}, the choice of $\tilde\pi_r(y|w_\mathcal{I})$ in (\ref{eq:red_like_pos_mixing}) is optimal with respect to the Hellinger distance.
In particular, we can bound the approximation error as
\begin{equation}\label{eq:bound_ccs}
	\Hell{\pi(w|y)}{\tilde\pi_r^\mathrm{CCS}(w|y)}^2 \leq \epsilon(r) = 2 \sum_{i\in\mathcal{J}} h_i,
\end{equation}
where $h\in\mathbb{R}^d$ is the \emph{diagnostic vector} defined by
\begin{equation}\label{eq:diagnostic}
	h_i= \frac{1}{\lambda_i^2} \int \left( \frac{\partial\log\pi(y|w)}{\partial w_i} \right)^{2} \pi(w|y) \diff{w}.
\end{equation}
Here $\lambda_i$ is the rate parameter of the exponential prior mixing density (\ref{eq:exp_mixing_pr}). 
Recall from (\ref{eq:Hell_pos_mix}) that the control in (\ref{eq:bound_ccs}) ensures 
$$
\Hell{\pi(x|y)}{\tilde\pi_r^{\mathrm{CCS}}(x|y)}^2\leq\epsilon(r).
$$ 

It is clear that the diagnostic $h$ in (\ref{eq:diagnostic}) cannot be computed exactly, since it requires the posterior mixing density to be known. 
Instead, we consider the Monte Carlo estimate
\begin{equation}\label{eq:MC_diagnostic}
	\tilde h_i =\frac{1}{N \lambda_i^2} \sum_{j=1}^N \left( \frac{\partial\log\pi(y|w^{(j)})}{\partial w_i} \right)^{2},
\end{equation}
where $\{w^{(j)}\}_{i=1}^N$ are iid samples from either $\pi(w)$ or some other approximation $\tilde\pi(w|y)$, e.g., $\MAPmix$, see the next section.

Once $h$ is estimated, the coordinate splitting can be performed according to (\ref{eq:bound_ccs}), which suggests that $\mathcal{I}$ should contain the indices associated with the largest components in $h$. 
Hence, for a desired precision $\tau$ on the Hellinger distance, we can attempt to find $\mathcal{I}$ satisfying $\epsilon(r) \leq \tau$. 
However, the resulting number of selected coordinates $r(\tau)$ can be abnormally large, especially if the bound (\ref{eq:bound_ccs}) is loose. 
In this case, we set $r=\min(r(\tau), r_{\max})$ with a pre-given $r_{\max}$ and let $\mathcal{I}$ contain the indices of the $r$ largest components in $h$. 

To allow for the application of standard Markov chain Monte Carlo (MCMC) methods, we suggest the following two reformulations in $\tilde\pi_r^\mathrm{CCS}(w_\mathcal{I}|y)$. 
The resulting expressions of the log target density and its gradient, and an efficient implementation are presented in \ref{proof:CCS_pos_mixing_lap}.

\paragraph{(1) Transformation of random variables.}
The domain of $w$, which is the positive orthant $\mathbb{R}^d_{>0}$, prohibits the application of most standard MCMC methods.
To circumvent this problem, we apply the element-wise transformation of random variables given by $f:\mathbb{R}_{>0} \rightarrow \mathbb{R}$:
\begin{equation}\label{eq:trans_w}
	f(w)=\log(w).
\end{equation}
Then, standard MCMC algorithms can be run on the variable $f(w)$, which now lives in $\mathbb{R}^d$.

\paragraph{(2) Approximation of optimal $\tilde\pi_r(y|w_\mathcal{I})$.}
A practical way to approximate $\tilde\pi_r(y|w_\mathcal{I})$ in (\ref{eq:red_like_pos_mixing}) is to fix the coordinates $w_\mathcal{J}$ at their prior mean as suggested in  \cite{zahmCertifiedDimensionReduction2022b,flockCertifiedCoordinateSelection2024}, i.e.,
\begin{equation*}\label{eq:ap_red_like_pos_mixing}
	w_\mathcal{J}=\left[ \frac{1}{\lambda_{\mathcal{J}_1}}, \cdots, \frac{1}{\lambda_{\mathcal{J}_{d-r}}}\right]^\mathsf{T}.
\end{equation*}
In fact, this approximation allows for further simplifications in $\CCSmix$, which reduce the computations of determinants and inverses of $d \times d$ to $r\times r$ matrices, see \ref{proof:CCS_pos_mixing_lap}.

\vspace{14pt}
We summarize the construction and sampling process of $\CCSmix$ in Algorithm \ref{alg:pos_mix_CCS}.
Note that the main computational effort is reduced to sampling from the low-dimensional approximate marginal posterior mixing density $\CCSmixmarg$ in line \ref{algline:red_pos_mix_1}.
This is because we can sample from $\pi(w_\mathcal{J})$ in line \ref{algline:red_pos_mix_2} independently and in a closed form.
\begin{algorithm}
	Estimate the diagnostic $\tilde{h}$ (\ref{eq:MC_diagnostic}).\;
	Identify the index set $\mathcal{I}$ that contains the indices associated with the $r=\min\{r(\tau), r_{\max}\}$ largest components of $\tilde{h}$.\;
	\For{$i=1$ \KwTo $N$}{
		Sample $w_\mathcal{I}^{(i)} \sim \tilde\pi_r(w_\mathcal{I}|y)$ defined in (\ref{eq:pos_mix_CCS_marg}). \label{algline:red_pos_mix_1}\;
		Sample $w_\mathcal{J}^{(i)} \sim \pi(w_\mathcal{J}) \propto \prod_{i\in\mathcal{J}} \exp \left( -\lambda_i w_i \right)$. \label{algline:red_pos_mix_2}\;
	}
	Re-order coordinates and return: $\{ w^{(i)} \}_{i=1}^N = \left\{ P^\mathsf{T} \left[\begin{array}{l}w_{\mathcal{I}}^{(i)}\\ w_{\mathcal{J}}^{(i)}\end{array}\right]\right\}_{i=1}^N$.
	\caption{Sampling the CCS-approximated posterior mixing density (\ref{eq:pos_mix_CCS})
	}
	\label{alg:pos_mix_CCS}
\end{algorithm}
\subsubsection{MAP-approximated posterior mixing density.}\label{sec:pos_mixMAP}
The following approximation is inspired by the classical Laplace approximation for posterior densities. 
Such an approximation corresponds to a Gaussian whose mean is given by the MAP estimate and whose covariance is given by the inverse of the Hessian of the negative log posterior density evaluated at the MAP estimate \cite{murphyMachineLearningProbabilistic2012}.
In the following approximation to $\pi(w|y)$, we use this concept to construct an approximate marginal posterior density $\MAPmixmarg$ as an alternative to $\CCSmixmarg$ in (\ref{eq:pos_mix_CCS}).
We denote the resulting approximate posterior mixing density by
\begin{equation}\label{eq:pos_mix_MAP}
	\MAPmix \propto \MAPmixmarg \pi(w_\mathcal{J}).
\end{equation} 

We start the construction of $\MAPmixmarg$ by computing a MAP estimate
\begin{equation}\label{eq:w_star}
	w^{\mathrm{MAP}} \in \arg\min_{w\in \mathbb{R}_{\geq0}^d} \, -\log\pi(w|y).
\end{equation}
Extending the domain of $w$ from $\mathbb{R}_{>0}^d$ to $\mathbb{R}_{\geq0}^d$ allows us to perform a coordinate splitting (\ref{eq:coord_spl}) analogous to the previous section based on the following index set:
\begin{equation}\label{eq:I_star}
	\mathcal{I} = \{i\in\{1,\dots,d\} | w^{\mathrm{MAP}}_i>0\}.
\end{equation}
We note that it is not clear if $\pi(w|y)$ is uni-modal, nor if $w^{\mathrm{MAP}}$ is unique.
In our numerical experiments, we always initialize the minimization solver for (\ref{eq:w_star}) at $w_i=0$ for all $i=1,\dots,d$, and the obtained stationary point facilitates good approximations.

After finding $w^{\mathrm{MAP}}$, we define the $r$-dimensional marginal approximate posterior mixing density $\MAPmixmarg$ in (\ref{eq:pos_mix_MAP}) as
\begin{equation}\label{eq:pos_mix_MAP_marg}
	\MAPmixmarg \propto \exp\left( -\frac12 \|w_{\mathcal{I}} - w_{\mathcal{I}}^{\mathrm{MAP}}\|^2_{ {\Sigma^{-1}_{\mathrm{MAP}}} } \right)
	\mathbbm{1} \left( w_{\mathcal{I}} \in \mathbb{R}^r_{>0} \right)
	,
\end{equation}
where $\Sigma_{\mathrm{MAP}}^{-1} =  [-\nabla^2 \log \pi(w^{\mathrm{MAP}}|y)]_{[\mathcal{I}, \mathcal{I}]}$ and $\mathbbm(1)$ is the indicator function.
Here, the subscript $[\mathcal{I}, \mathcal{I}]$ refers to the selection of rows and columns with indices $i\in\mathcal{I}$ such that $\Sigma_{\mathrm{MAP}}\in \mathbb{R}^{r\times r}$, and the formula for $\nabla^2 \log \pi(w|y)$ can be found in \ref{proof:pos_mixing_lap}.
$\MAPmixmarg$ is essentially a truncated Gaussian density on $\mathbb{R}_{>0}^{r}$ centered at $w^{\mathrm{MAP}}_{\mathcal{I}}$.
Sampling from the low-dimensional yet truncated Gaussian density is not straightforward, but special methods, e.g., \cite{botevNormalLawLinear2017a}, are available.
We denote the posterior approximation which results from replacing $\pi(w|y)$ by $\MAPmix$ by
\begin{equation}\label{eq:pos_MAP}
	\tilde\pi_r^{\mathrm{MAP}}(x|y)=\int \pi(x|w,y) \MAPmix \diff{w}.	
\end{equation}

Recall that the MAP of $\pi(w)$ is $w_i=0$ for all $i=1,2,\dots,d$.
Therefore, the coordinate splitting (\ref{eq:I_star}) can be justified by arguing that the coordinates that are most affected by the update from the mixing prior $\pi(w)$ to the mixing posterior $\pi(w|y)$ are those whose MAP estimate changes from $w_i=0$ to $w_i>0$. 
Moreover, we observe in our numerical experiments that if the mixing prior differs significantly from the mixing posterior only on a few coordinates, we have $r \ll d$.
This has two favorable consequences.
First, we can efficiently compute the minimization program (\ref{eq:w_star}), see \ref{proof:pos_mixing_lap_max} for details.
Second, sampling from $\MAPmix$ becomes computationally cheap, since the main sampling effort is reduced to sampling from the truncated $r$-dimensional Gaussian density $\MAPmixmarg$. 
We summarize the construction and sampling from $\MAPmix$ in Algorithm \ref{alg:pos_mix_MAP}.
\begin{algorithm}
	Compute a MAP estimate $w^{\mathrm{MAP}}$ via (\ref{eq:w_star}).\;
	Identify the index set of selected coordinates $\mathcal{I}$ via (\ref{eq:I_star}).\;
	Compute ${\Sigma^{-1}_{\mathrm{MAP}}} =  [-\nabla^2 \log \pi(w^{\mathrm{MAP}}|y)]_{[\mathcal{I}, \mathcal{I}]}$.\;
	\For{$i=1$ \KwTo $N$}{
		Sample $w_{\mathcal{I}}^{(i)} \sim \MAPmixmarg$ defined in (\ref{eq:pos_mix_MAP_marg}). \;
		Sample $w_\mathcal{J}^{(i)} \sim \pi(w_\mathcal{J}) \propto \prod_{i\in\mathcal{J}} \exp \left( -\lambda_i w_i \right)$.\;
	}
	Re-order coordinates and return: $\{ w^{(i)} \}_{i=1}^N = \left\{ P^\mathsf{T} \left[\begin{array}{l}w_{\mathcal{I}}^{(i)}\\ w_{\mathcal{J}}^{(i)}\end{array}\right]\right\}_{i=1}^N$.
	\caption{\\Sampling the MAP-approximated posterior mixing density (\ref{eq:pos_mix_MAP})}
	\label{alg:pos_mix_MAP}
\end{algorithm}
\subsection{Comparison to the Bayesian LASSO}
The Bayesian LASSO \cite{parkBayesianLasso2008d} is designed to handle the LASSO estimate of linear regression parameters in the Bayesian framework. The LASSO posterior density equates to our posterior in (\ref{eq:pos_lap_pr}) if one (i) lets $A$ be a matrix of standardized regressors, (ii) assumes the variance of the noise to be constant across parameters, i.e., $\sigobs=\sigma^2 I$, $\sigma^2>0$, and (iii) fixes the rate parameter globally, i.e., $\delta_i=\delta>0$ for all $i$.

Our sampling method is different from the Bayesian LASSO, since the latter employs a hierarchical framework by modeling the noise variance $\sigma^2$ as hyper-parameter and conditioning the Laplace prior of $x$ on $\sigma^2$. This modeling choice is made to ensure uni-modality of the posterior density. Moreover, the global rate parameter $\delta$ can be fixed in advance via empirical Bayes or also included in the hierarchical framework as hyper-parameter. In our notation, the hierarchical model of the Bayesian LASSO reads
\begin{eqnarray}\label{eq:bay_lasso}
	\eqalign{
		y|x, \sigma^2 &\sim \mathcal{N}(Ax,\sigma^2 I) \\
		x|\sigma^2, w_1, w_2, \dots, w_d &\sim \mathcal{N}(0,\sigma^2 \Lambda_w) \\
		\sigma^2, w_1, w_2, \dots, w_d &\sim \pi(\sigma^2) \diff{\sigma^2} \prod_{i=1}^d \frac{\delta^2}{2} \exp\left(- \frac{\delta^2 w_i^2}{2} \right) \diff{w_i}
	}
\end{eqnarray}

The parameters in the hierarchical model (\ref{eq:bay_lasso}) can be sampled with a Gibbs sampler in closed form by exploiting conjugacy relationships. Sampling the full conditional $x|\sigma^2, \delta, w_1, w_2, \dots, w_d$ is similar to sampling our Gaussian posterior component density but with the slightly different posterior component precision matrix $\sigpo(w)^{-1} = \sigma^{-1} [ A^\mathsf{T} A + \Lambda_{w}^{-1} ]$, which is due to the conditioning of $x$ on $\sigma^2$.

Another difference worth mentioning from a sampling perspective is the way in which the vector of variances $w$ is sampled. While the Bayesian LASSO can sample all full conditionals $w_i|x_i,\sigma^2,\delta^2$ independently, we must sample all $w_i$ from $\pi(w|y)$ at once. Hence, it is challenging to apply our general sampling Algorithm \ref{alg:gen} for the exact posterior in high-dimensional problems, but it also motivates the use of the dimension-reduced approximations $\tilde\pi_r^\mathrm{CCS}(w|y)$ in (\ref{eq:pos_mix_CCS}) or $\tilde\pi_r^\mathrm{MAP}(w|y)$ in (\ref{eq:pos_mix_MAP}). 
A clear advantage of our method is that as soon as $w$ is sampled, sampling $x$ from the posterior component density $\pi(x|w,y)$ can be performed in parallel. 
%
\section{Numerical experiments}\label{sec:num_exp}
In the following numerical experiments, we test our methods from Section \ref{sec:app_lap} on posterior densities with Laplace prior.
Besides showing the feasibility of our approximations (\ref{eq:pos_CCS}) and (\ref{eq:pos_MAP}), we aim to highlight the benefits of approximating $\pi(w|y)$ instead of directly approximating $\pi(x|y)$.

Specifically, in the first experiment we compare two CCS-approximations of $\pi(x|y)$: one obtained via CCS on $\pi(w|y)$, see Section \ref{sec:pos_mixCCS}, and the other one obtained by applying CCS directly on $\pi(x|y)$.
To this end, we write CCS(W) or CCS(X), depending on whether we apply the CCS method to $\pi(w|y)$ or $\pi(x|y)$.
We differentiate the resulting posterior approximations by using the notations
\begin{eqnarray*}
	\CCSposW \propto \int \pi(x|w,y) \CCSmixmarg \pi(w_\mathcal{J}) \diff{w} \\
	\CCSposX \propto \CCSmixmargX \pi(x_\mathcal{J}).
\end{eqnarray*}
Here $\CCSmixmargX$ is obtained analogously to $\CCSmixmarg$ by replacing $w$ by $x$ in (\ref{eq:pos_mix_CCS_marg}) and (\ref{eq:red_like_pos_mixing}).
Moreover, the diagnostic for CCS(X) is computed as in (\ref{eq:diagnostic}) with $w$ replaced by $x$, and $\lambda_i$ replaced by $\delta_i$.

In the second experiment, we compare two MAP-approximations of $\pi(x|y)$: one obtained via a Laplace approximation on the selected coordinates $w_\mathcal{I}$ of $\pi(w|y)$, see Section \ref{sec:pos_mixMAP}, and the other one obtained via a Laplace approximation directly on $\pi(x|y)$.
To differentiate between the resulting posterior approximations, we denote them by 
\begin{eqnarray*}
	\MAPposW \propto \int \pi(x|w,y) \MAPmixmarg \pi(w_\mathcal{J}) \diff{w} \\
	\MAPposX \propto \exp\left(-\frac12 \| x - x^\mathrm{MAP} \|^2_{ \tilde H(x^\mathrm{MAP})}\right),
\end{eqnarray*}
where $x^\mathrm{MAP}\in\mathbb{R}^d$ is the MAP estimate of $\pi(x|y)$, and $H(x^\mathrm{MAP})\in\mathbb{R}^{d\times d}$ is an approximation to $-\nabla^2 \log \pi(x^\mathrm{MAP}|y)$, see Section \ref{sec:pos_mixMAP}.

In some experiments, we apply the Metropolis-adjusted Langevin algorithm (MALA) \cite{robertsExponentialConvergenceLangevin1996e} for sampling.
MALA is derived by discretizing a Langevin diffusion equation and steers the sampling process by using gradient information of the log-target density \cite{robertMonteCarloStatistical2004}.
Whenever we use MALA, we compute 5 independent chains and check their convergence by means of the estimated potential scale reduction (EPSR) statistic \cite{gelmanInferenceIterativeSimulation1992}.
In brief, the EPSR compares the within-variance of the sample chains with their in-between-variance, and chains with an EPSR of less than $1.1$ are considered to have converged. 
We use the Python package arviz \cite{Kumar2019} to compute the ESPR, effective sample sizes (ESS) and credibility intervals (CI) (see, e.g., \cite{murphyMachineLearningProbabilistic2012} for definitions).

\subsection{1D signal deblurring}
In this section, we use a relatively low dimensional test problem to study the behavior of our methods. Due to the low dimensionality, we can compute reference samples $x^{(i)}\sim\pi(x|y)$ and $w^{(i)}\sim \pi(w|y)$ by running MALA on the full-dimensional posterior $\pi(x|y)$ and full-dimensional posterior mixing density $\pi(w|y)$, respectively. 
For both reference sample sets, we compute a total of $25,000$ samples with an ESS of approximately $5,000$. 
Note that we apply the random variable transformation (\ref{eq:trans_w}) to be able to employ MALA for sampling $\pi(w|y)$.
\subsubsection{Problem description.}
The data $y$ is obtained artificially via
\begin{eqnarray*}
	y=G s_\mathrm{true}+e,
\end{eqnarray*}
where $s_{\mathrm{true}} \in \mathbb{R}^{1024}$ denotes the piece-wise constant ground truth, $G$ is a Gaussian blur operator with the kernel width $27$ and standard deviation $3$, and $e \in \mathbb{R}^{1024}$ is a realization from $\mathcal{N}(0, \sigma_\mathrm{obs}^2)$ with $\sigma_\mathrm{obs}=0.03$. The true signal and the data are shown in Figure \ref{fig:1D_ex_problem}.
\begin{figure}[h]%
	\centering
	\begin{minipage}[t]{.48\textwidth}
		\centering
		\includegraphics[]{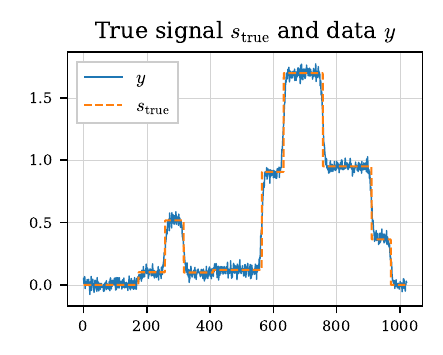}
		\caption{True signal and data of the 1D example.}
		\label{fig:1D_ex_problem}
	\end{minipage}%
	\hfill
	\begin{minipage}[t]{.48\textwidth}
		\centering
		\includegraphics[]{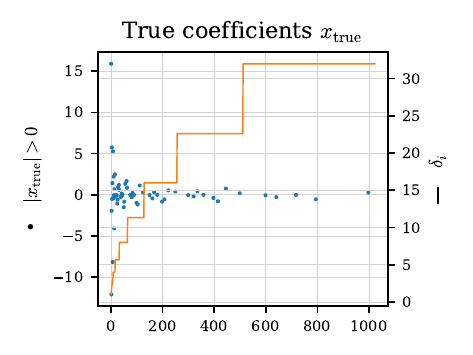}
		\caption{Scatters (left y-axis): Haar wavelet coefficients of the true signal. Solid line (right y-axis): Rate parameters $\delta_i$ given in (\ref{eq:delta}).}
		\label{fig:1D_ex_rate_coeff}
	\end{minipage}
\end{figure}

We employ a $10$-level Haar wavelet transform with periodic boundary condition and formulate the Bayesian inverse problem in the coefficients domain. Let $W$ and $W^\dagger$ denote the discrete wavelet transform and the inverse discrete wavelet transform, respectively. The true coefficients $x_\mathrm{true}=W s_\mathrm{true}$ are sparse with $\|x_\mathrm{true}\|_0=60$, see Figure \ref{fig:1D_ex_rate_coeff}. The posterior density formulated with respect to the coefficients reads
\begin{equation}\label{eq:1D_ex}
	\pi(x|y) \propto \exp\left( -\frac{1}{2 \sigma_\mathrm{obs}^2} \| y - G W^\dagger x \|_2^2 - \sum_{i=1}^{1024} \delta_i |x_i | \right).
\end{equation}
Thus, following our previous notation for the forward operator, we have $A=GW^\dagger$. We use the Python package pywt \cite{Lee2019} to compute the discrete wavelet transforms.

We define the rate parameters $\delta_i$ such that our prior in (\ref{eq:1D_ex}) on $x$ corresponds to a Besov-$\mathcal{B}_{11}^1$ prior \cite{KolSparse, lassasDiscretizationinvariantBayesianInversionand2009} on the signal $s$. In this way, we naturally obtain a vector of rate parameters $\delta_i$, which takes into account the different scales of the wavelet coefficients as
\begin{equation}\label{eq:delta}
	\delta_i=2^{ \frac{\ell(i)}{2}},
\end{equation}
where $\ell(i) \in \{1,\dots,10\}$ denotes the level of the $i$-th wavelet coefficient. We plot $\delta_i$ in Figure \ref{fig:1D_ex_rate_coeff}.
\subsubsection{MAP-approximated posterior mixing density.}\label{sec:1D_approx_sam}
In the following, we use Algorithm \ref{alg:pos_mix_MAP} to compute $5,000$ samples $x^{(i)}\sim\MAPposW$.
We first sample $w^{(i)}\sim\MAPmix$ given in (\ref{eq:pos_mix_MAP}), and then draw $5,000$ samples $x^{(i)}\sim\pi(x|w^{(i)},y)$ using linear RTO (see \ref{proof:lin_RTO}).

In order to set up the MAP-based approximation $\MAPmix$, we first need to compute $w^{\mathrm{MAP}}$ in (\ref{eq:w_star}). 
To this end, we use the L-BFGS-B algorithm \cite{byrd1995limited} with the initial setting $w_i=0$ for all $i=1,\dots,d$. 
We obtain a sparse result with $\|w^{\mathrm{MAP}}\|_0=57$, where non-zero components are mostly concentrated on the lower indices: $\max_i \{i|w^{\mathrm{MAP}}_i \neq 0\} = 125$. 
Due to the sparseness in $w^\mathrm{MAP}$, sampling from $\MAPmix$ is computationally cheap.
In particular, the main computational cost is sampling from a $57$-dimensional truncated multivariate Gaussian.
We achieve this by using the minimax tilting method from \cite{botevNormalLawLinear2017a}, which yields independent samples. 
Consequently, we draw $x^{(i)}\sim\pi(x|w^{(i)},y)$ using linear RTO.

To compare the samples of $\MAPposW$ with the reference samples, we compute their means and bounds of the $60\%$ and $90\%$ CIs in signal space and plot the results in Figure \ref{fig:1D_ex_post} (a).
Visually, we can confirm that our results are very close to the reference solution. Moreover, we note that the samples from $\MAPposW$ are uncorrelated.
\begin{figure}[h]
	\centering
	\includegraphics[]{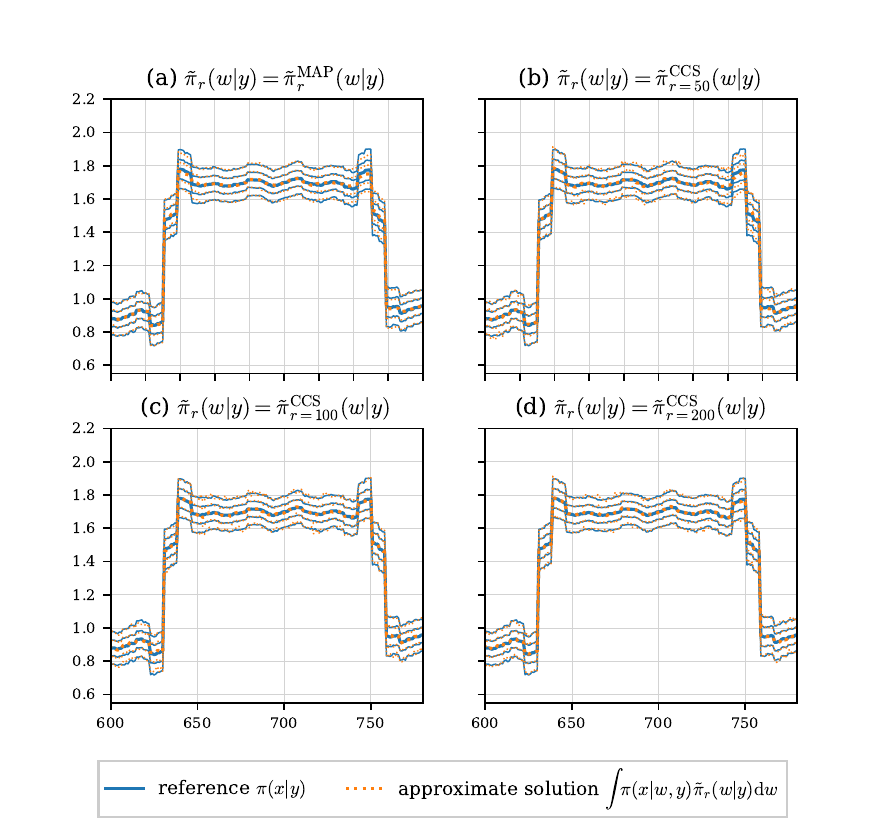}
	\caption{Sample means (thick lines) and bounds of the $60\%$ and $90\%$ CI (thin lines) for the signal $s=W^\dagger x$.
		The sampling results from our method are shown as dotted lines, and the reference solution is displayed as solid line. 
		For (a), we use the samples obtained in Section \ref{sec:1D_approx_sam} and for (b), (c), and (d), the samples from Section \ref{sec:1D_red_sam}. 
		We only show a section of the signal for better visual comparison.} 
	\label{fig:1D_ex_post}
\end{figure}
\subsubsection{CCS-approximated posterior mixing density.}\label{sec:1D_red_sam}

We now compare CCS(W) and CCS(X) in terms of closeness to the exact posterior and sample correlation.
To ensure optimal conditions for both approximations, we compute Monte Carlo estimates $\tilde h_{\mathrm{ref},\mathrm{w}}$ and $\tilde h_{\mathrm{ref},\mathrm{x}}$ of the exact diagnostics using the reference samples $w^{(i)} \sim \pi(w|y)$ and $x^{(i)} \sim \pi(x|y)$.
The resulting bounds in the Hellinger distance (\ref{eq:bound_ccs}) are computed by first sorting the diagnostics in ascending order and then calculating their cumulative sums.
We show the bounds in Figure \ref{fig:1D_ex_bounds_sort}, where we see that we can achieve error bounds on the posterior approximation of several magnitudes smaller with CCS(W) compared to CCS(X).
This is because the bound of CCS(W) in turn bounds the corresponding posterior approximation $\CCSposW$ due to Proposition \ref{prop:bound_pos_mix}.
\begin{figure}[h]
	\centering
	\includegraphics[]{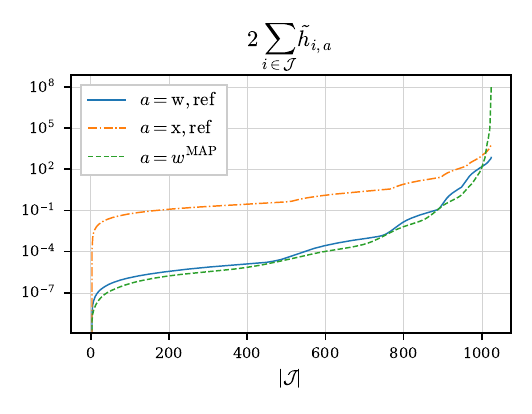}
	\caption{Bound on the Hellinger distance (\ref{eq:bound_ccs}) computed via the reference diagnostics $\tilde{h}_{\mathrm{ref},\mathrm{w}}$ and $\tilde{h}_{\mathrm{ref},\mathrm{x}}$, and the approximation $\tilde{h}_{w^\mathrm{MAP}}$.
		}
	\label{fig:1D_ex_bounds_sort}
\end{figure}
Additionally, we compute $\tilde h_{w^\mathrm{MAP}}$ using the samples $w^{(i)} \sim \tilde\pi_r^\mathrm{MAP}(w|y)$ obtained in the previous section.
The corresponding bound is also shown in Figure \ref{fig:1D_ex_bounds_sort} and yields a fairly good approximation to the reference $\tilde h_{\mathrm{ref}, \mathrm{w}}$.

We now perform the different coordinate splittings $r\in\{50,100,200\}$ for CCS(W) and CCS(X). 
The index sets $\mathcal{I}$ for CCS(W) and CCS(X) are performed according to $\tilde{h}_{\mathrm{ref},\mathrm{w}}$ and $\tilde{h}_{\mathrm{ref},\mathrm{x}}$, respectively. 
Then, we sample the dimension-reduced densities $\CCSmixmarg$ and $\CCSmixmargX$, where we approximate the optimal reduced likelihood by setting the not selected coordinates to the respective prior mean in both densities.
That is, we set $x_{\mathcal{J},i}=0$ in CCS(X) and $w_{\mathcal{J},i}=1/\lambda_i$ for $i=1,\dots,d-r$ in CCS(W).
For CCS(W), we use Algorithm \ref{alg:pos_mix_CCS} to compute $5,000$ samples $w^{(i)}\sim\tilde\pi_r^\mathrm{CCS}(w|y)$, and then use linear RTO to draw $5,000$ samples $x^{(i)}\sim\pi(x|w^{(i)},y)$.
For CCS(X), we directly use MALA to compute $5,000$ samples $x^{(i)}\sim\CCSposX$.

To compare the performance of CCS(W) with CCS(X) in terms of sample correlation, we compute the mean normalized ESS ($\mathrm{nESS}=\mathrm{ESS}/N$, for $N$ samples) for the sample sets shown in the rows of Table \ref{tab:1D_ESS}.
One of the key findings here is that the two-step sampling structure of Algorithm \ref{alg:gen} can significantly decrease the correlation in the posterior samples $\{x^{(i)}\}_{i\geq1}$.
This can be seen when comparing the mean nESS of the sample sets $\{w_\mathcal{I}^{(i)}\}_{i\geq1}$ (first row) and $\{x_\mathcal{I}^{(i)}\}_{i\geq1}$ (second row).
Moreover, when comparing the mean nESS of the sample set $\{x_\mathcal{I}^{(i)}\}_{i\geq1}$ obtained by CCS(W) (second row) and CCS(X) (third row), we observe that CCS(W) yields significantly less correlated samples in the selected coordinates $x_\mathcal{I}$.
%
\begin{table}[ht]
	\caption{Mean nESS for different sample sets obtained via CCS(W) and CCS(X) with number of selected coordinates $r\in\{50,100,200\}$.
		We note that the standard deviation over the sample chains is very small for all sample sets.
	}
	\label{tab:1D_ESS}
	\begin{center}
		\begin{tabular}{@{}lllll}
			\br
			$r$& &$50$ &$100$ &$200$ \\
			
			\mr
			\multirow{2}{*}{CCS(W)}  
			&$w_\mathcal{I}^{(i)}\sim\tilde\pi_r^\mathrm{CCS}(w_\mathcal{I}|y)$ &0.36 &0.24 &0.16 \\
			&$x_\mathcal{I}^{(i)}\sim\pi(x_\mathcal{I}|w^{(i)},y)$ &0.84 &0.78 &0.78\\
			
			
			\mr
			CCS(X)
			&$x_\mathcal{I}^{(i)}\sim\tilde\pi_r^\mathrm{CCS(X)}(x_\mathcal{I}|y)$ &0.70 &0.23 &0.10\\
			
			\br
		\end{tabular}
	\end{center}
\end{table}

Finally, we visually compare our sampling results from CCS(W) with the reference solution in Figure \ref{fig:1D_ex_post}, see plots (b), (c), and (d).
It is remarkable that even for $r=50$, we are already very close to the reference solution, and for $r=100$ we practically recover the reference solution, such that selecting more coordinates has only little benefits on the result.

\subsection{2D super-resolution microscopy}
In this experiment, we illustrate the feasibility of our approximated posterior mixing density $\MAPmix$ in a high-dimensional problem. 
The test problem is inspired by the application of stochastic optical reconstruction microscopy (STORM) in \cite{zhuFasterSTORMUsing2012}.
A similar example was considered in the Bayesian context in \cite{durmusEfficientBayesianComputation2018}. 
STORM is a super-resolution microscopy technique based on single-molecule stochastic switching, where the goal is to detect molecule positions in live cell imaging. 
The images are obtained by a microscope detecting the photon count of the (fluorescence) photoactivated molecules.
\subsubsection{Problem description.}
We consider a microscopic image $y\in\mathbb{R}^m$, which is obtained from a 2D pixel-array by concatenation in the usual column-wise fashion. 
Here, we set $m=32^2=1,024$. 
In STORM, we want to estimate precise molecule positions by computing a super-resolution image $x\in\mathbb{R}^{d}$. 
In this example, we set the oversampling ratio $k=4$, which leads to $d=mk^2=16,384$. 
Based on the kernel from the optical measurement instrument given in \cite{zhuFasterSTORMUsing2012}, we generate the forward operator $A\in\mathbb{R}^{m\times d}$. 
The data $y$ is obtained via
\begin{equation}\label{eq:storm_data}
	y= A x_\mathrm{true} + e,
\end{equation}
where $e \in \mathbb{R}^m$ is simulated from $\mathcal{N}(0, \sigma_\mathrm{obs}^2)$.

Similar as in \cite{zhuFasterSTORMUsing2012}, we generate the ground-truth image $x_\mathrm{true}$ for the high photon count case with $50$ uniformly distributed molecules on a field of the size $4\mu\mathrm{m}\times 4\mu\mathrm{m}$. 
The intensity of each molecule is simulated from a log-normal distribution with the mode $3,000$ and the standard deviation $1,700$. 
In Figure \ref{fig:storm_data_truth_map}, we show the ground-truth image and the data, which is obtained according to (\ref{eq:storm_data}) with $\sigma_\mathrm{obs}=30$. 
We use a Laplace prior due to the sparsity of $x_\mathrm{true}$, which leads to the posterior density
\begin{equation}
	\pi(x) \propto \exp\left( -\frac{1}{2 \sigma_\mathrm{obs}^2 } \| y - A x \|_2^2 - \delta \|x\|_1 \right).
\end{equation}
We chose $\delta=1.275$ based on the visual quality of different MAP estimate computations.
\begin{figure*}[ht]
	\centering
	\includegraphics[]{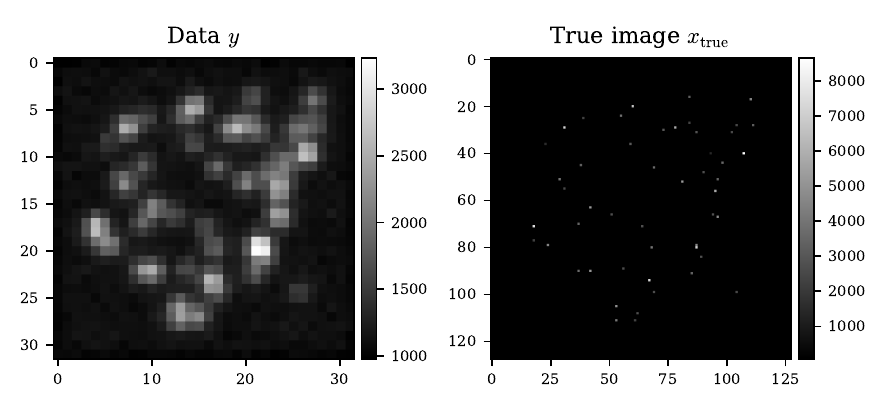}
	\caption{Left: Data $y$ computed via (\ref{eq:storm_data}). Right: True image $x_\mathrm{true}$.}
	\label{fig:storm_data_truth_map}
\end{figure*}
%

\subsubsection{MAP-based approximation of the posterior mixing density.}\label{sec:1D_approx_sam_2}
In the following, we compare our posterior approximation $\MAPposW$ in (\ref{eq:pos_MAP}) with a Laplace approximation of $\pi(x|y)$, which we denote by $\MAPposX$. 
Recall that we write MAP(W) and MAP(X) to differentiate between the two posterior approximation methods.

We first compute approximate posterior samples via MAP(X).
We start by constructing the Laplace approximation $\MAPposX$ by computing the MAP estimate of $\pi(x|y)$.
This is achieved by solving $x^\mathrm{MAP} = \argmin_x - \log\pi(x|y)$, which is a convex optimization problem, and we use the convex optimization toolbox cvxpy \cite{diamond2016cvxpy} for the computation.
The MAP estimate $x^\mathrm{MAP}$ is shown in Figure \ref{fig:Faster_STORM_MAP}, and we observe that it recovers the true molecule positions quite well.
\begin{figure}[ht]
	\centering
	\begin{minipage}[t]{0.47\textwidth}
		\vspace{0pt}
		\centering
		\includegraphics[]{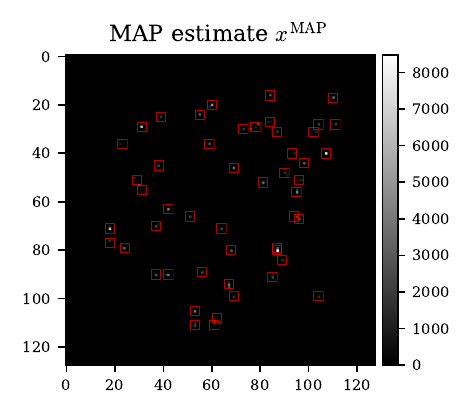}
		\caption{MAP estimate $x^\mathrm{MAP}$. The true molecule positions are located in the centers of the red boxes.}
		\label{fig:Faster_STORM_MAP}
	\end{minipage}\hfill
	\begin{minipage}[t]{0.47\textwidth}
		\vspace{0pt}
		\centering
		\includegraphics[]{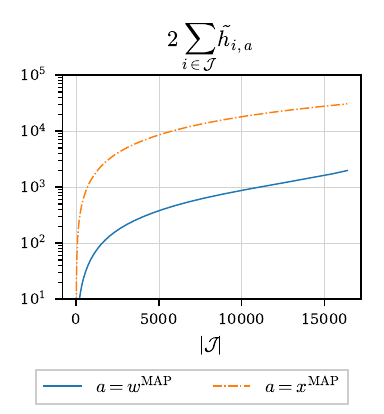}
		\caption{Bound on the Hellinger distance (\ref{eq:bound_ccs}) for CCS(W) and CCS(X), estimated via the approximate diagnostics $\tilde{h}_{w^\mathrm{MAP}}$ and $\tilde{h}_{x^\mathrm{MAP}}$, respectively.
			}
		\label{fig:Faster_STORM_diagnostics}
	\end{minipage}
\end{figure}

We cannot directly proceed with the construction of $\MAPposX$, since the Laplace approximation requires the log posterior density to be two times differentiable at the MAP estimate.
This is not the case for the present posterior density, and we approximate the Laplace prior as
$\pi(x) \propto \exp (-\delta \|x\|_1) \approx \exp(-\delta \sum_{i=1}^d \sqrt{x_i^2 + \gamma})$ for some $0 < \gamma \ll 1$.
We follow the heuristic rule in \cite{flockCertifiedCoordinateSelection2024} and fix $\gamma=\delta^2/4$.
The requirement of this approximation is a disadvantage compared to MAP(W), where we employ the Laplace approximation on the selected coordinates $w_\mathcal{I}$ of the smooth density $\pi(w|y)$.
With the modified prior density, the approximated Hessian of the negative log posterior density at $x^\mathrm{MAP}$ reads 
$\tilde H(x^\mathrm{MAP}) = 1/\sigma_\mathrm{obs}^2 A^\mathsf{T} A + \Lambda_z$,
where $\Lambda_z\in\mathbb{R}^{d\times d}$ is a diagonal matrix with the vector 
$z=\delta \gamma / \sqrt{(x^\mathrm{MAP})^{2} + \gamma}\in\mathbb{R}^{d}$ on its diagonal.
Thus, $\MAPposX=\mathcal{N}(x^\mathrm{MAP}, \tilde H(x^\mathrm{MAP})^{-1})$, and we draw $10,000$ samples $x^{(i)}\sim\MAPposX$ using linear RTO.

We now compute approximate posterior samples via MAP(W).
Recall that sampling from $\MAPposW$ consists of sampling $w^{(i)} \sim \tilde\pi_r^{\mathrm{MAP}}(w|y)$ and then sampling $x^{(i)}\sim\pi(x|w^{(i)},y)$.
To obtain $\MAPmix$, we first compute $w^\mathrm{MAP}$ via the formulations in \ref{proof:pos_mixing_lap_max}. 
As in the experiment before, we use the L-BFGS-B algorithm \cite{byrd1995limited} with the initial guess $w_i=0$ for all $i=1,\dots,d$.
We obtain $\|w^\ast\|_0=138$, which is very sparse considering the total dimension of $d=16,384$. 
Using Algorithm \ref{alg:pos_mix_MAP}, we then compute $10,000$ samples of $w^{(i)} \sim \tilde\pi_r^{\mathrm{MAP}}(w|y)$, where we use the minimax tilting method from \cite{botevNormalLawLinear2017a} to sample from the $138$-dimensional truncated multivariate Gaussian $\MAPmixmarg$.
Subsequently, we draw $10,000$ samples $x^{(i)}\sim\pi(x|w^{(i)},y)$ via linear RTO.

With the samples $w^{(i)} \sim \tilde\pi_r^{\mathrm{MAP}}(w|y)$ and $x^{(i)}\sim\MAPposX$, we compute approximate diagnostics $\tilde{h}_{w^\mathrm{MAP}}$ and $\tilde{h}_{x^\mathrm{MAP}}$. 
In Figure \ref{fig:Faster_STORM_diagnostics}, we compare the corresponding upper bounds on the Hellinger distance for CCS(W) and CCS(X).
Even though the bounds are abnormally large, the bound corresponding to $\tilde{h}_{w^\mathrm{MAP}}$ is orders of magnitude smaller than the one corresponding to $\tilde{h}_{x^\mathrm{MAP}}$.
This indicates that CCS(W) may yield a better posterior approximation than CCS(X), since the bound of CCS(W) in turn bounds the resulting posterior approximation $\CCSposW$ due to Proposition \ref{prop:bound_pos_mix}.

In Figure \ref{fig:storm_mean_ci}, we compare the sample means and differences between the bounds of the 99\% sample CIs of the samples $x^{(i)}\sim\MAPposW$ and $x^{(i)}\sim\MAPposX$.
We observe that the true molecule positions are identified equally well by both posterior means.
Comparing the sample CIs, we note that the CI widths of $\MAPposW$ are tighter at the molecule positions but larger everywhere else.
\begin{figure*}[ht]%
	\centering
	\includegraphics[]{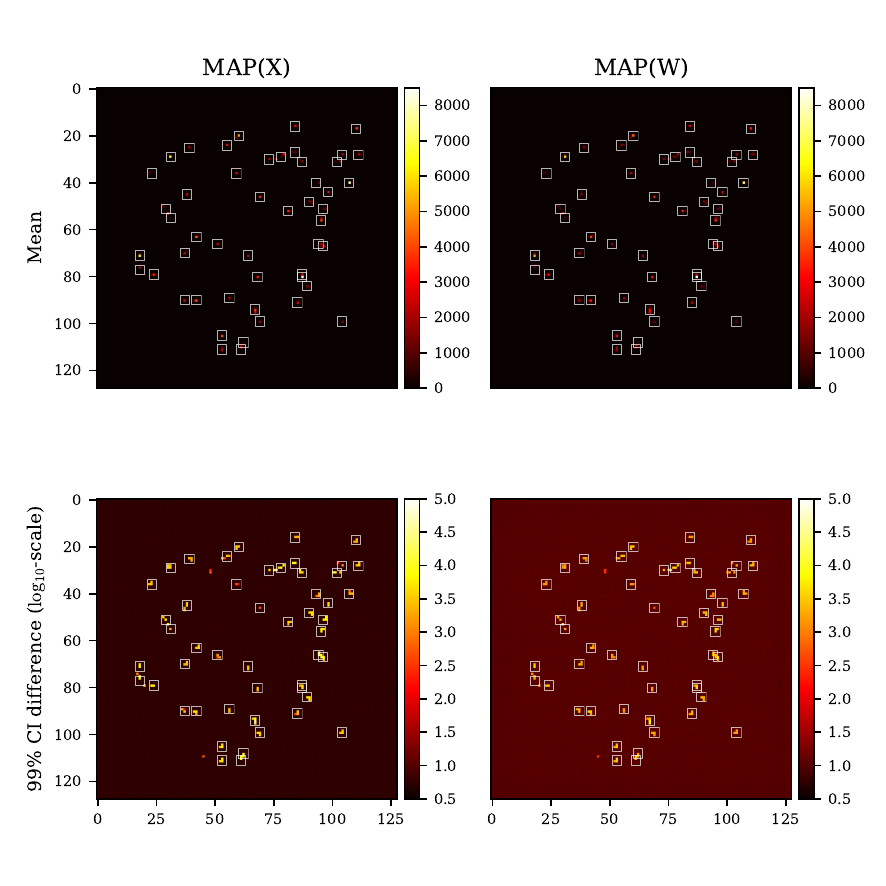}
	\caption{Left column: Results from sampling $\MAPposX$. 
		Right column: Results from sampling $\MAPposW$. 
		First row: Sample mean. 
		Second row: Difference between the bounds of the 99\% sample CI. 
		The true molecule positions are located in the centers of the white boxes.}
	\label{fig:storm_mean_ci}
\end{figure*}

To obtain more insights on the difference between the approximations, we plot three exemplary sections at the image coordinates $x_\mathrm{image}=18$, $x_\mathrm{image}=87$, and $y_\mathrm{image}=28$ in Figure \ref{fig:Faster_STORM_mean_CI_sect}.
We observe that the mean of $\MAPposX$, i.e., $x^\mathrm{MAP}$, is close to the truth, which however, does not mean that it is close to the true posterior mean.
Moreover, since $\MAPposX$ is Gaussian, its CI bounds are symmetrical around the mean and may reach wide into unrealistic negative photon counts, e.g., see the sections at $x_\mathrm{image}=18$ and $y_\mathrm{image}=28$.
On the contrary, the solution given by $\MAPposW$ is more realistic as most of the probability mass is concentrated on the positive photon counts.
Furthermore, the fact that the sample mean of $\MAPposW$ mostly underestimates the true photon numbers and that they are not necessarily contained in the 99\% sample CI can be explained by the typical shrinkage property of the Laplace prior.
\begin{figure*}[ht]%
	\centering
	\includegraphics[]{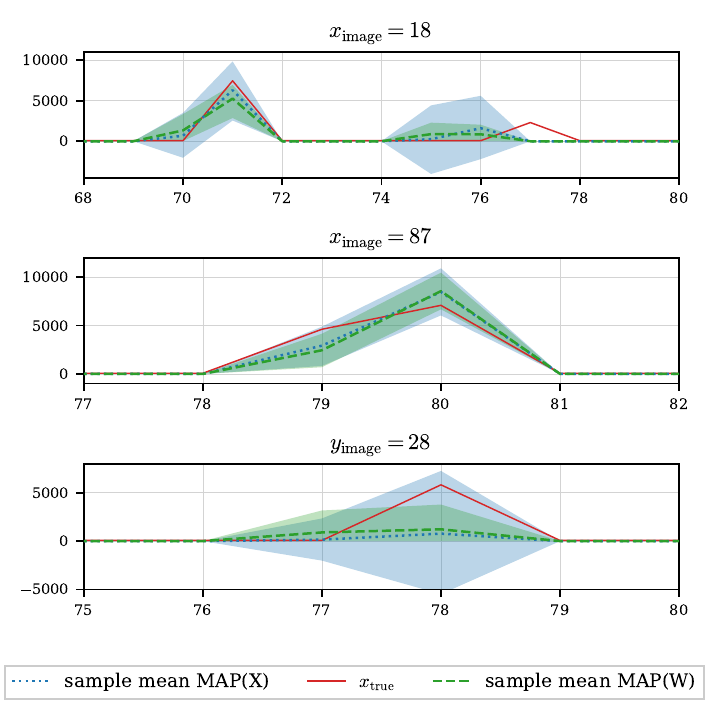}
	\caption{Means and differences between the bounds of the 99\% CI estimated via samples from $\MAPposW$ and $\MAPposX$ at the image coordinates $x_\mathrm{image}=18$, $x_\mathrm{image}=87$, and $y_\mathrm{image}=28$.}
	\label{fig:Faster_STORM_mean_CI_sect}
\end{figure*}

\section{Conclusions}\label{sec:conc}
We derived the closed form expressions of the posterior component density and posterior mixing density in the case of a linear-Gaussian likelihood and a Gaussian mixture prior. 
Our proposed sampling strategy for such a posterior includes two steps. First, sampling the mixing variables from the posterior mixing density, and second, sampling the component densities conditioned on the sampled mixing variables. 
Since direct sampling from the posterior mixing density is often not possible, alternative methods like MCMC methods must be used.
However, exact sampling via such methods is relatively expensive due to the computational complexity of the mixing density. 
Therefore, we propose to replace the posterior mixing density by a dimension-reduced approximation and provide a bound in the Hellinger distance for the resulting posterior approximation.

Furthermore, we considered the Laplace prior, which can be written as a continuous Gaussian mixture, and described two methods using the CCS method to approximate the posterior mixing density. 
In our numerical results, we showed that approximating the posterior mixing density via these methods, instead of directly approximating the posterior density, yields samples closer to the exact posterior.
Moreover, due to the two-step sampling scheme, the correlation among posterior samples can be significantly reduced.

As a future research direction, we aim to investigate whether the numerical improvements observed on problems with the Laplace prior can also be achieved with other continuous Gaussian mixture priors, such as the Student's-t or horseshoe prior.
\section{Acknowledgments}
RF and YD were supported by a Villum Investigator grant (no. 25893) from The Villum Foundation. FU was supported by the Research Council of Finland through the Flagship of Advanced Mathematics for Sensing, Imaging and Modelling, and the Centre of Excellence of Inverse Modelling and Imaging (decision numbers 359183 and 353095, respectively).
\appendix
\setcounter{section}{0}
\section{Proofs}
\subsection{\texorpdfstring{Proposition \ref{prop:pos_mix}}{Proposition 1}}\label{proof:pos_mix}
Under Assumption \ref{ass:gen_pos_mix_ass}, we have the conditional independence
\begin{equation}\label{eq:cond_ind_proofs}
	\pi(y,w|x) = \pi(y|x)\pi(w|x)
\end{equation}
and can directly obtain
\begin{eqnarray*}
	\pi(w|y)
	&= \frac{\pi(w,y)}{\pi(y)} \\
	&= \frac{\int \pi(w,y|x)\pi(x)\diff{ x}}{\pi(y)} \qquad \mathrm{with}\ \mathrm{(\ref{eq:cond_ind_proofs})} \\
	&= \frac{\int \pi(w|x)\pi(y|x)\pi(x)\diff{x}}{\pi(y)} \\
	&\propto \pi(w)\int \pi(y|x)\pi(x|w)\diff{ x},
\end{eqnarray*}
which is the expression given in (\ref{eq:gen_pos_mix_2}). 
Moreover,
\begin{eqnarray*}
	\pi(x|y,w)
	&= \frac{\pi(y,w|x)\pi(x)}{\pi(y,w)} \qquad \mathrm{with}\ \mathrm{(\ref{eq:cond_ind_proofs})} \\
	&= \frac{\pi(y|x)\pi(w|x)\pi(x)}{\pi(y,w)} \\
	&\propto \pi(y|x)\pi(x|w),
\end{eqnarray*}
which is the expression given in (\ref{eq:gen_pos_mix_1}). 
\subsection{\texorpdfstring{Proposition \ref{prop:bound_pos_mix}}{Proposition 2}}\label{proof:hell_bound}
Consider
\begin{eqnarray*}
	\pi(x|y) &= \int \pi(x|w,y)\pi(w|y)\diff{w} \\
	\tilde\pi(x|y) &= \int \pi(x|w,y)\tilde\pi(w|y)\diff{w}.
\end{eqnarray*}
Suppose that we have control of the Hellinger distance such that
\begin{equation*}
	\Hell{\pi(w|y)}{\tilde\pi(w|y)}^2 =  \frac{1}{2} \int( \sqrt{\pi(w|y)}-\sqrt{\tilde\pi(w|y)} )^2\diff{w} \leq \epsilon.
\end{equation*}

Then, a Cauchy--Schwarz inequality permits to write
\begin{eqnarray*}\fl
	\Hell{\pi(x|y)}{\tilde\pi(x|y))}^2 &= 1 - \int \sqrt{\pi(x|y)} \sqrt{\tilde\pi(x|y)} \diff{x} \\
	&= 1 - \int \sqrt{ \int \pi(x|w,y)\pi(w|y) \diff{w} } \sqrt{ \int \pi(x|w,y) \tilde\pi(w|y)\diff{w} } \diff{x} \\
	&{\leq} 1 - \int \int \sqrt{ \pi(x|w,y)\pi(w|y) } \sqrt{ \pi(x|w,y)\tilde\pi(w|y) } \diff{w} \diff{x} \\
	&= 1 - \int \int \pi(x|w,y) \diff{x} \sqrt{ \pi(w|y) \tilde\pi(w|y) } \diff{w}  \\
	&= 1 - \int \sqrt{ \pi(w|y) \tilde\pi(w|y) } \diff{w}  \leq \varepsilon.
\end{eqnarray*}
\subsection{\texorpdfstring{Proposition \ref{prop:Gauss_pos_mix}}{Proposition 3}}\label{proof:Gauss_pos_mix}
%
According to Proposition \ref{prop:pos_mix} we have the posterior mixture
\begin{equation*}
	\pi(x|y) = \int \pi(x|w,y) \pi(x|w) \diff{w},
\end{equation*}
where
\begin{eqnarray*}
	\pi(x|w,y) &\propto \pi(y|x)\pi(x|w),\\
	\pi(w|y) &\propto \pi(w)\int \pi(y|x')\pi(x'|w) \diff{x'}. 
\end{eqnarray*}
We know that $\pi(y|x)$ is a linear-Gaussian likelihood and $\pi(x|w)$ is a Gaussian prior conditioned on $w$. They are defined as
\begin{equation*}
	\pi(y|x)=\frac{1}{\sqrt{(2\pi)^m\det\sigobs}} \exp\left( -\frac12 \|Ax-y\|_{\sigobs^{-1}}^2 \right),
\end{equation*}
and
\begin{equation*}
	\pi(x|w)=\frac{1}{\sqrt{(2\pi)^d\det\sigpr(w)}} \exp\left( -\frac12 \| x -\mupr(w) \|_{\sigpr(w)^{-1}}^2 \right).
\end{equation*}
Then, $\pi(x|w,y)$ is also Gaussian, and the precision matrix and mean are given as
\begin{eqnarray}
	\sigpo(w)^{-1} &= A^\mathsf{T} \sigobs^{-1}A + \sigpr(w)^{-1} \label{eq_app:conpo_sig} \\
	\mupo(w,y) &= \sigpo(w)\left( A^\mathsf{T} \sigobs^{-1}y + \sigpr(w)^{-1} \mupr(w) \right). \label{eq_app:conpo_mean}
\end{eqnarray}

We now derive the expression for $\pi(w|y)$. 
Using the explicit densities of $\pi(y|x)$ and $\pi(x|w)$, we have 
\begin{eqnarray*}\fl\eqalign{
		\pi(w|y)
		&= \frac{\pi(w)\int \pi(y|x)\pi(x|w) \diff{ x } }{\pi(y)} \\
		&\propto \frac{\pi(w)}{\sqrt{\det\sigpr(w)}}\int \exp\left( -\frac12 \|Ax-y\|_{\sigobs^{-1}}^2  -\frac12 \| x  -\mupr(w) \|_{\sigpr(w)^{-1}}^2 \right)
		\diff{ x  }.
	}
\end{eqnarray*}
Employing (\ref{eq_app:conpo_sig}) and (\ref{eq_app:conpo_mean}), we can write
\begin{eqnarray*}\fl\eqalign{
		&-\frac12 \|Ax-y\|_{\sigobs^{-1}}^2  -\frac12 \| x  -\mupr(w) \|_{\sigpr(w)^{-1}}^2  \\
		&= -\frac12 x^\mathsf{T} \big(A^\mathsf{T}\sigobs^{-1}A + \sigpr(w)^{-1}\big) x + x^\mathsf{T}  A^\mathsf{T}\sigobs^{-1}y -\frac12\|y\|_{\sigobs^{-1}}^2  - \frac12 \|\mupr(w)\|_{\sigpr(w)^{-1}}^2 + x^\mathsf{T} \sigpr(w)^{-1} \mupr(w) \\
		&= -\frac12 \| x - \mupo(w,y)\|_{\sigpo(w)^{-1}}^2 +\frac12\|\mupo(w,y)\|_{\sigpo(w)^{-1}}^2- \frac12\|y\|_{\sigobs^{-1}}^2 - \frac12 \|\mupr(w)\|_{\sigpr(w)^{-1}}^2.
	}
\end{eqnarray*}
Hence, we derive
\begin{eqnarray*}\fl\eqalign{
		\pi(w|y)
		&\propto \frac{\pi(w)}{\sqrt{\det\sigpr(w)}}\int \exp\left( -\frac12 \|Ax-y\|_{\sigobs^{-1}}^2  -\frac12 \| x -\mupr(w)  \|_{\sigpr(w)^{-1}}^2 \right) \diff{ x } \\
		&\propto \frac{\pi(w)}{\sqrt{\det\sigpr(w)}} \exp\left(\frac12\|\mupo(w,y)\|_{\sigpo(w)^{-1}}^2\right) \exp\left(- \frac12 \|\mupr(w)\|_{\sigpr(w)^{-1}}^2\right)  \\
		& \qquad\cdot \int \exp\left( -\frac12 \| x - \mupo(w,y)\|_{\sigpo(w)^{-1}}^2 \right)  \diff{ x }  \\
		&\propto \frac{\pi(w)}{\sqrt{\det\sigpr(w)}} \exp\left(\frac12\|\mupo(w,y)\|_{\sigpo(w)^{-1}}^2\right) \exp\left(- \frac12 \|\mupr(w)\|_{\sigpr(w)^{-1}}^2\right)  \sqrt{\det \sigpo(w)}. \\
	}
\end{eqnarray*}
\section{Linear RTO}\label{proof:lin_RTO}
Here we describe linear RTO for sampling from $\mathcal{N}(\mupo(w^{(i)},y), \sigpo(w^{(i)}))$. 
In the following we drop the superscripts ${(i)}$ for clarity. 

By the definition of $\sigpo(w)$ in (\ref{eq:conpo_sig}) and $\mupo(w,y)$ in (\ref{eq:conpo_mean}), we have that $x$, given by
\begin{eqnarray} \label{eq:RTO_pos_comp} \fl
	x = \mupo(w,y) + \sigpo(w)^{1/2}\xi = \sigpo(w)\left( A^\mathsf{T} \sigobs^{-1}y + \sigpr(w)^{-1} \mupr(w) + \sigpo(w)^{-1/2}\xi \right),
\end{eqnarray}
satisfies $x\sim\mathcal{N}(\mupo(w,y), \sigpo(w))$ whenever $\xi\sim\mathcal{N}(0,I_d)$. 
In order to compute (\ref{eq:RTO_pos_comp}) without evaluating the potentially expensive factorization $\sigpo(w)^{1/2}$ (or $\sigpo(w)^{-1/2}$), linear RTO proceeds as follows. 

Consider the decompositions $\sigobs^{-1} = L_1 L_1^\mathsf{T}$ and $\sigpr(w)^{-1} = L_2(w) L_2(w)^\mathsf{T}$. 
Note that $L_1$ can be pre-computed and recycled for any new sample $w$. 
Also, when $\sigpr(w)$ is a diagonal matrix (as in our application in Section \ref{sec:app_lap}), the computation of $L_2(w)$ is trivial. 
Thus, the random vector $u=A^\mathsf{T} L_1 \zeta + L_2(w) \gamma$, where $\zeta \sim \mathcal{N}(0,I_m) $ and $ \gamma \sim \mathcal{N}(0,I_d) $ satisfies $u\sim  \sigpo(w)^{-1/2}\xi $, so that (\ref{eq:RTO_pos_comp}) can be rewritten as
\begin{equation}\label{eq:x_lRTO}
	\sigpo(w)^{-1} x =  A^\mathsf{T} \sigobs^{-1}y + \sigpr(w)^{-1} \mupr(w) + A^\mathsf{T} L_1 \zeta + L_2(w) \gamma.
\end{equation}
%
Finally, we let $M(w)^\mathsf{T} = \left[A^\mathsf{T} L_1, L_2(w)\right]$ and $z(w)^\mathsf{T} = \left[y^\mathsf{T} L_1 + \zeta^\mathsf{T}, \mupr(w)^\mathsf{T} L_2(w)  + \gamma^\mathsf{T} \right]$, such that (\ref{eq:x_lRTO}) corresponds to the optimality condition for the following linear least squares problem:
\begin{equation}\label{eq:CGLS}
	x = \argmin_{a\in \mathbb{R}^{d}} \norm{ M(w)a - z(w) }_2^2.
\end{equation}
\section{Explicit expressions for the case of Laplace prior}

\subsection{The gradient and the Hessian of the posterior mixing density}\label{proof:pos_mixing_lap}
Based on (\ref{eq:pos_mixing}) and (\ref{eq:exp_mixing_pr}) we obtain the log posterior mixing density for the product-form Laplace prior as
\begin{eqnarray}\label{eq:log_mix}\fl\eqalign{
		\log \pi(w|y) =
		&- \sum_{i=1}^{d} \lambda_i w_i  - \frac12 \log \det \left[\Lambda_w^{\frac12} A^\mathsf{T} \sigobs^{-1} A \Lambda_w^{\frac12} + I_d\right] \\
		&+ \frac12 y^\mathsf{T} \sigobs^{-1} A \Lambda_w^{\frac12}\left[\Lambda_w^{\frac12}A^\mathsf{T} \sigobs^{-1} A\Lambda_w^{\frac12} + I_d\right]^{-1}\Lambda_w^{\frac12} A^\mathsf{T} \sigobs^{-1} y - \log c_{W|Y=y},
	}
\end{eqnarray}
where $I_d$ denotes the $d$-by-$d$ identity matrix, $\Lambda_w$ denotes a diagonal matrix with $w$ on the diagonal, and $c_{W|Y=y}$ is a normalizing constant such that $\pi(w|y)$ is a probability density. 

The gradient reads
\begin{equation}\label{eq:grad_log_mix}
	\nabla \log \pi(w|y)
	= -\lambda  - \frac12 \diag{\left[\left(A^\mathsf{T} \sigobs^{-1} A\right)^{-1}+\Lambda_w\right]^{-1}} + \frac12 z^{\odot2},
\end{equation}
and the Hessian is
\begin{eqnarray*}\fl\eqalign{
		\nabla^2 \log \pi(w|y) = 
		\frac12 \left( \left[\left(A^\mathsf{T} \sigobs^{-1} A\right)^{-1}+\Lambda_w\right]^{-1} \right)^{\odot 2} 
		- \Lambda_z \left[\left(A^\mathsf{T} \sigobs^{-1} A\right)^{-1}+\Lambda_w\right]^{-1} \Lambda_z,
	}
\end{eqnarray*}
where $z= \left[\left(A^\mathsf{T} \sigobs^{-1} A\right)^{-1}+\Lambda_w\right]^{-1} \left[A^\mathsf{T} \sigobs^{-1} A\right]^{-1} A^\mathsf{T} \sigobs^{-1} y$.
Moreover, $\diag{B}$ draws the main diagonal from the matrix $B$ as a column vector, and $\cdot^{\odot 2}$ denotes the element-wise square.
\subsection{MAP of the posterior mixing density}\label{proof:pos_mixing_lap_max}
We solve
\begin{equation}\label{eq:MAP}
	\min_{w\in[0,\infty)^d} -\log \pi(w|y)
\end{equation}
in order to obtain a MAP estimate of the posterior mixing density. 
We set the initial $w$ as the $0$-vector and apply L-BFGS-B introduced in \cite{byrd1995limited} to solve (\ref{eq:MAP}). 

The main computational cost during the optimization lies in the calculations of the determinant and the inverse of $\Lambda_w^{\frac12}A^\mathsf{T} \sigobs^{-1} A \Lambda_w^{\frac12} + I_d$ in $\log \pi(w|y)$ and the inverse of $\Lambda_w + \left(A^\mathsf{T} \sigobs^{-1} A\right)^{-1}$ in $\nabla\log\pi(w|y)$ at each iteration. However, in every iteration we expect that the most of $w_i$ for $i=1, \cdots, d$ remain at the initial value, i.e., 0, so that we can reduce the computational complexity by using the sparsity of $w$. 
The details are as follows. 
Note that we do not use an iteration index for simplicity.

Set $\mathcal{I} = \{i|w_i>0\}$ to include all $r=|\mathcal{I}|$ indices that correspond to the informed coordinates in $w$, where $r\ll d$. 
In addition, we define $\mathcal{J}=\{1, \cdots, d\}\backslash \mathcal{I}$ to contain all non-informed indices, and set $w_j=0$ for $j\in\mathcal{J}$. 
Based on $\mathcal{I}$, we define a permutation matrix $P$ such that the first $r$ elements of the vector $Pw$ are the coordinates $w_i$ with $i\in\mathcal{I}$. 

Then, we use the permutation to efficiently compute the log posterior mixing density (\ref{eq:log_mix}) as follows.
\[
P\left(\Lambda_w^{\frac12}A^\mathsf{T} \sigobs^{-1} A \Lambda_w^{\frac12} + I_d\right)P^\mathsf{T} = \left[\begin{array}{ll} B & 0 \\ 0 & I_{d-r}\end{array}\right],
\]
where $0$ denotes the matrices with all elements equal to 0, and $B=P\Lambda_{w_{\mathcal{I}}}^{\frac12}A^\mathsf{T} \sigobs^{-1} A \Lambda_{w_{\mathcal{I}}}^{\frac12}P^\mathsf{T}\in\mathbb{R}^{r\times r}$. By using the properties of the permutation matrix, we get
\begin{eqnarray*}
	\det\left[\Lambda_w^{\frac12} A^\mathsf{T} \sigobs^{-1} A \Lambda_w^{\frac12} + I_d\right] & = \det\left[P\left(\Lambda_w^{\frac12}A^\mathsf{T} \sigobs^{-1} A \Lambda_w^{\frac12} + I_d\right)P^\mathsf{T}\right]
	= \det(B),\\
	\left(\Lambda_w^{\frac12}A^\mathsf{T} \sigobs^{-1} A \Lambda_w^{\frac12} + I_d\right)^{-1} & = 
	P^\mathsf{T}\left[\begin{array}{ll} B^{-1} & 0 \\ 0 & I_{d-r}\end{array}\right]P.
\end{eqnarray*}
Due to $r\ll d$, we note that the size of $B$ is much smaller than $\Lambda_w^{\frac12}A^\mathsf{T} \sigobs^{-1} A \Lambda_w^{\frac12} + I_d$. 
Hence, it is much cheaper to compute its determinant and its inverse. 

To compute the inverse of $\left(A^\mathsf{T} \sigobs^{-1} A\right)^{-1}+\Lambda_w$ for the gradient of the log posterior mixing density (\ref{eq:grad_log_mix}), we apply the Woodbury matrix identity \cite{higham2002accuracy}. 
We define $\hat A = A^\mathsf{T} \sigobs^{-1} A$, and partition the permutation matrix $P$ into 
\[
P=\left[\begin{array}{l}P_{\mathcal{I}}\\ P_{\mathcal{J}}\end{array}\right]
\]
with $P_{\mathcal{I}}\in\mathbb{R}^{r\times d}$ and $P_{\mathcal{J}}\in\mathbb{R}^{(d-r)\times d}$. Then, we can rewrite
\[
\left(A^\mathsf{T} \sigobs^{-1} A\right)^{-1}+\Lambda_w = \hat{A}^{-1}+P_{\mathcal{I}}^\mathsf{T}\Lambda_{w_{\mathcal{I}}}P_{\mathcal{I}}.
\]
Applying the Woodbury formula, we obtain
\begin{equation}
	\left[\hat{A}^{-1}+P_{\mathcal{I}}^\mathsf{T}\Lambda_{w_{\mathcal{I}}}P_{\mathcal{I}}\right]^{-1} = \hat{A} - \hat{A}P_{\mathcal{I}}^\mathsf{T}\left(\Lambda_{w_{\mathcal{I}}}^{-1} +P_{\mathcal{I}}\hat{A}P_{\mathcal{I}}^\mathsf{T}\right)^{-1}P_{\mathcal{I}}\hat{A}.
\end{equation}
Note that $\Lambda_{w_{\mathcal{I}}}^{-1} +P_{\mathcal{I}}\hat{A}P_{\mathcal{I}}^\mathsf{T}$ is a $r$-by-$r$ matrix with $r \ll d$, whose inverse is computationally much cheaper than the inverse of $\hat{A}^{-1}+\Lambda_w$.
\subsection{CCS-approximated posterior mixing density}\label{proof:CCS_pos_mixing_lap}
In order to sample the CCS-approximated posterior mixing density, we first use the logarithmic function to transform $w$, i.e.,
\begin{equation*}
	v_i =\log(w_i).
\end{equation*}
Then, the log posterior mixing density for the product-form Laplace prior reads
\begin{eqnarray*}\fl
	\log \pi(v|y)
	=
	&-\sum_{i=1}^{d} \lambda_i \exp(v_i) - \frac12 \log \det \left[ A^\mathsf{T} \sigobs^{-1} A + \Lambda_{\exp(v)}^{-1} \right] \\
	&+ \frac12 y^\mathsf{T} \sigobs^{-1} A \left[ A^\mathsf{T} \sigobs^{-1} A + \Lambda_{\exp(v)}^{-1} \right]^{-1} A^\mathsf{T} \sigobs^{-1} y + \frac12 \sum_{i=1}^d v_i - \log c_{V|Y=y},
\end{eqnarray*}
where $c_{V|Y=y}$ is a normalizing constant such that $\pi(v|y)$ is a probability density. Its gradient is
\begin{eqnarray*}\fl
	\nabla \log \pi(v|y) =
	&- \Lambda_{\lambda} \exp(v) - \frac12 \diag{ \left[A^\mathsf{T} \sigobs^{-1} A + \Lambda_{\exp(v)}^{-1}  \right]^{-1} A^\mathsf{T} \sigobs^{-1} A } \\
	&+ \frac12 \left( \Lambda_{\exp(v)}^{-\frac12} \left[ A^\mathsf{T} \sigobs^{-1} A + \Lambda_{\exp(v)}^{-1}  \right]^{-1} A^\mathsf{T} \sigobs^{-1} y \right)^{\odot 2}  + 1.
\end{eqnarray*}

Now we approximate the optimal $\tilde \pi(y|v_\mathcal{I})$ in (\ref{eq:red_like_pos_mixing}) by fixing $v_\mathcal{J}$ in $\tilde \pi(y|v)$ at their prior mean, i.e., we fix 
\begin{equation}\label{eqn:v_j}
	v_\mathcal{J} = \left[ -\log \lambda_{\mathcal{J}_1}, \dots, -\log \lambda_{\mathcal{J}_{d-r}} \right]^\mathsf{T}.
\end{equation}
Obviously, the main computational cost is associated to the calculations of the determinant and the inverse of $A^\mathsf{T} \sigobs^{-1} A + \Lambda_{\exp(v)}^{-1}$. Since $v_\mathcal{J}$ is fixed, only $r$ elements on the main diagonal in $A^\mathsf{T} \sigobs^{-1} A + \Lambda_{\exp(v)}^{-1}$ are changing during the sampling. Therefore, we can re-order its elements and keep the computations mainly on a $r$-by-$r$ submatrix for each sample. We give the details as follows.

Similar as in \ref{proof:pos_mixing_lap_max}, we define the permutation matrix $P$ for a coordinate splitting (\ref{eq:coord_spl}) based on an estimate of the diagnostic vector $h$ (\ref{eq:MC_diagnostic}). Thus, the first $r$ elements of the vector $Pv$ are the coordinates $v_i$ with $i\in\mathcal{I}$, and the remaining coordinates are the coordinates $v_j$ with $j\in\mathcal{J}$. By using $P$, we obtain
$$
P \left[ A^\mathsf{T} \sigobs^{-1} A + \Lambda_{\exp(v)}^{-1} \right] P^\mathsf{T} = \left[
\begin{array}{ll} 
	B + \Lambda_{\exp(v_\mathcal{I})}^{-1} & C \\ 
	C^\mathsf{T} & D
\end{array} \right],
$$
with the constant matrices $B=[A^\mathsf{T} \sigobs^{-1} A]_{\mathcal{I},\mathcal{I}}$, $C=[A^\mathsf{T} \sigobs^{-1} A]_{\mathcal{I},\mathcal{J}}$, and $D=[A^\mathsf{T} \sigobs^{-1} A]_{\mathcal{J},\mathcal{J}}+ \Lambda_{\lambda_\mathcal{J}}$.
Now we define 
$$
Z_{v_\mathcal{I}}=\Lambda_{\exp(v_\mathcal{I})}^{-1}+B - C D^{-1} C^\mathsf{T}\in\mathbb{R}^{r\times r},
$$
where $B - C D^{-1} C^\mathsf{T}$ is constant and can be pre-computed. 
Based on the Schur complement, we know
\begin{eqnarray*}\fl\eqalign{
		\det \left[ A^\mathsf{T} \sigobs^{-1} A + \Lambda_{\exp(v)}^{-1} \right] =& \det(D)\det\left(Z_{v_\mathcal{I}}\right),\\
		\left[ A^\mathsf{T} \sigobs^{-1} A + \Lambda_{\exp(v)}^{-1}  \right]^{-1} =& P^\mathsf{T} \left[\begin{array}{ll}
			Z_{v_\mathcal{I}}^{-1} & - Z_{v_\mathcal{I}}^{-1}CD^{-1}\\
			-D^{-1}C^TZ_{v_\mathcal{I}}^{-1} & 
			D^{-1}+D^{-1}C^TZ_{v_\mathcal{I}}^{-1}CD^{-1}\end{array}\right] P.}
\end{eqnarray*}
This way, we only need to compute the determinant and the inverse of the $r$-by-$r$ matrix $Z_{v_\mathcal{I}}$ for each new $v_\mathcal{I}$.
%
%
\printbibliography
\end{document}